\newcommand{\remove}[1]{}
\newcommand{\ap}{\textsc{Arrow-Placement}\xspace}
\newcommand{\dap}{\textsc{Discrete-Arrow-Placement}\xspace}
\newcommand{\ov}[1]{\operatorname{ov}(#1)}
\newcommand{\opt}{\textsf{Opt}\xspace}
\newcommand{\heurglobal}{\textsf{HeurGlobal}\xspace}
\newcommand{\heurlocal}{\textsf{HeurLocal}\xspace}
\newcommand{\editor}{\textsf{Editor}\xspace}
\newcommand{\planar}{\textsc{Planar}\xspace}
\newcommand{\random}{\textsc{Random}\xspace}
\newcommand{\north}{\textsc{North}\xspace}
\title{Placing Arrows in Directed Graph Drawings%
\thanks{Work is partially supported by the MIUR project AMANDA ``Algorithmics for MAssive and Networked DAta'', prot. 2012C4E3KT\_001. }
}
\author{
Carla Binucci\inst{1},
Markus Chimani\inst{2},
Walter Didimo\inst{1},\\
Giuseppe Liotta\inst{1},
Fabrizio Montecchiani\inst{1}
}
\date{}
\institute{
Universit\`a degli Studi di Perugia, Italy\\
\email{\{carla.binucci,walter.didimo,giuseppe.liotta,fabrizio.montecchiani\}@unipg.it} 
\and
Osnabr{\"u}ck University, Germany,
\email{markus.chimani@uni-osnabrueck.de}
}
\begin{document}
\maketitle

\begin{abstract}
We consider the problem of placing arrow heads in directed graph drawings without them overlapping other drawn objects. This gives drawings where edge directions can be deduced unambiguously. We show hardness of the problem, present exact and heuristic algorithms, and report on a practical study.
\end{abstract}

\section{Introduction}\label{se:introduction}

The default way of drawing a directed edge is to draw it as a line with an arrow head at its target. While there also exist other models (placing arrows at the middle, drawing edges in a ``tapered'' fashion, etc.; cf.~\cite{DBLP:conf/apvis/HoltenIWF11,Holten:2009:USV:1518701.1519054}) the former is prevailing in virtually all software systems. However, this simple model becomes problematic when several edges attach to a vertex on a similar trajectory: it may be hard to see whether a specific edge is in- or outgoing, cf.\ Fig.~\ref{fi:example} and Figs.~\ref{fi:large-example-North} and~\ref{fi:large-example-Planar-Random} in the appendix.

We try to solve this issue by looking for a placement of the arrow heads such that (a) they do not overlap other edges or arrow heads, and (b) still retain the property of being at---or at least close to---the target vertices of the edges. In the following, we show NP-hardness of the problem, propose exact and heuristic algorithms for its discretized variant, and evaluate their practical performance in a brief exploratory study. We remark that our problem is related to map labeling and in particular to edge labeling problems~\cite{DBLP:conf/isaac/GemsaNN13,DBLP:conf/wea/GemsaNR14,DBLP:journals/comgeo/KakoulisT01,DBLP:reference/crc/KakoulisT13,DBLP:journals/comgeo/KreveldSW99,Marks91thecomputational,DBLP:journals/geoinformatica/StrijkK02,DBLP:journals/ijcga/StrijkW01,DBLP:journals/algorithmica/WagnerWKS01,Wolff00asimple}. 

For space reasons, some proofs and technical details are omitted in this extended abstract, and can be found in the appendix.
\begin{figure}
\centering
\subfigure{\includegraphics[scale=0.3]{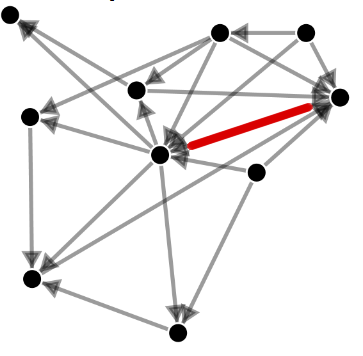} }\hfil
\subfigure{\includegraphics[scale=0.3]{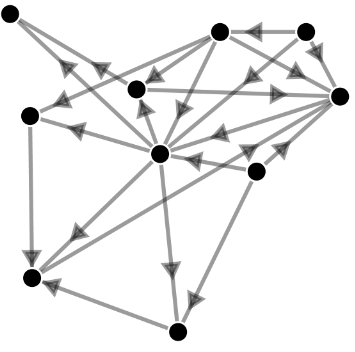}}
\caption{\label{fi:example}Layouts of a digraph with 10 vertices and 21 edges. (left) The arrows are placed by a common editor; several arrows overlap and the direction of, e.g., the thick red edge is not clear. (right) The arrows are placed by our exact method. }
\end{figure}

\section{The Arrow Placement Problem}\label{se:problem}
We first formally define our arrow placement problem and establish its theoretical time complexity.
%
Let $G=(V,E)$ be a digraph and let $\Gamma$ be a straight-line drawing of $G$. We assume that in $\Gamma$ each vertex $v \in V$ is drawn as a circle (possibly a point) $C_v$. We also assume that, for each edge $e \in E$, the arrow of $e$ is modeled as a circle $C_e$ of positive radius, centered in a point along the segment that represents $e$: when $\Gamma$ is displayed, the arrow of $e$ is drawn as a triangle inscribed in $C_e$, suitably rotated according to the direction of $e$.
We assume that all circles representing a vertex (arrow) have a common radius $r_V$ ($r_E$, respectively).
We say that two arrows---or an arrow and a vertex---\emph{overlap} if their corresponding circles intersect in two points.
An arrow and an edge \emph{overlap} if the segment representing the edge intersects the circle representing the arrow in two points. For the sake of simplicity, we reuse terms of theoretical concepts also for their visual representation: ``arrow'' and ``vertex'' also refer to their corresponding circle in $\Gamma$; ``edge'' also refers to its corresponding segment in $\Gamma$.

\begin{definition}\label{de:valid-placement}
Let $a_e$ denote the arrow of an edge $e\in E$. A \emph{valid position} for $a_e$ in $\Gamma$ is such that: \texttt{(P1)} for every vertex $v \in V$, $a_e$ and $v$ do not overlap; \texttt{(P2)} for every edge $g \in E$, $g\neq e$, $a_e$ and $g$ do not overlap. 
An assignment of a valid position to each arrow is called a \emph{valid placement} of the arrows, denoted by $P_\Gamma$.
\end{definition}

\begin{definition}\label{de:overlap-number}
Given a valid placement $P_\Gamma$, the \emph{overlap number} of $P_\Gamma$ is the number of pairs of overlapping arrows, and is denoted as $\ov{P_\Gamma}$.  
\end{definition}

Given a straight-line drawing $\Gamma$ of a digraph $G=(V,E)$, and constants $r_V$, $r_E$, we ask for a valid placement $P_\Gamma$ of the arrows (if one exists) such that $\ov{P_\Gamma}$ is minimum. This optimization problem is NP-hard; we prove this by showing the hardness of the following decision problem \ap.

\medskip\noindent{\bf Problem:} \ap

\smallskip\noindent\textsc{Instance:} $\langle G=(V,E), \Gamma, r_V, r_E \rangle$.

\smallskip\noindent\textsc{Question:} Does there exist a valid placement $P_\Gamma$ of the arrows with $\ov{P_\Gamma}=0$?

\begin{theorem}\label{th:np-hardness}
The \ap problem is NP-hard.
\end{theorem}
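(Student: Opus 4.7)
The plan is to reduce from \textsc{Planar 3-SAT} to the decision problem \ap. Given a formula $\phi$ with $n$ variables and $m$ clauses whose variable--clause incidence graph admits a planar embedding, I would build a straight-line drawing $\Gamma$ of a digraph, together with radii $r_V$ and $r_E$, such that $\Gamma$ admits a valid placement $P_\Gamma$ with $\ov{P_\Gamma}=0$ if and only if $\phi$ is satisfiable.

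First, I would design a \emph{variable gadget} in which one distinguished edge's arrow has exactly two valid positions, call them $T$ and $F$, corresponding to the truth value of the variable. This is achieved by surrounding the edge with a tight arrangement of auxiliary vertices and edges whose circles and segments obstruct every position along the distinguished edge except two narrow ``slots,'' each just wide enough to accommodate a circle of radius $r_E$ without violating \texttt{(P1)} or \texttt{(P2)}. Next, I would design a \emph{wire gadget} that transmits the truth value of a variable to each of its occurrences in clauses: a chain of short edges whose arrows, by virtue of the non-overlap constraint, are forced into a globally consistent ``synchronous'' configuration---either all arrows take their $T$-slot or all take their $F$-slot. Local twists in the chain realize negation.

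Third, I would design a \emph{clause gadget}: the three wire endpoints carrying the three literals of a clause meet in a common region, arranged so that if all three arrows choose their ``false'' slot then at least two of them are forced to overlap, whereas any configuration in which at least one literal is true allows all three arrows to sit in pairwise disjoint positions. Combining the gadgets according to the planar embedding of $\phi$'s incidence graph, and scaling each gadget small relative to the distances between gadgets, yields the drawing $\Gamma$.

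The principal obstacle is the geometric engineering. The same two radii $r_V,r_E$ must work uniformly across the entire construction; within each gadget the ``forbidden'' positions must be truly excluded by overlap with a vertex or a non-incident edge in the sense of Definition~\ref{de:valid-placement}; and, crucially, different gadgets must not create unintended constraints on one another. This requires careful quantitative estimates on the gadget dimensions and the interstitial spacing. Once these estimates are in place, both directions of the equivalence ``$\phi$ satisfiable $\Leftrightarrow$ $\Gamma$ admits a placement with $\ov{P_\Gamma}=0$'' follow directly from the gadget specifications, and the whole construction has size polynomial in $n+m$, yielding NP-hardness of \ap.
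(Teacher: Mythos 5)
Your overall strategy---reducing from \textsc{Planar 3-SAT} via two-state variable gadgets, value-transmitting chains with negation handled at the attachment points, and clause gadgets---is exactly the route the paper takes (there, variables are horizontal chains of equilateral ``triangle-blocks'' of side $2r_E\sqrt{3}$, each edge of which admits exactly two arrow positions, and clause legs are chains of right-trapezoid blocks). The genuine problem is your clause gadget. You want three literal-carrying arrows, each with a $T$-slot and an $F$-slot, such that the all-$F$ choice forces an overlap while every other choice admits a conflict-free placement. But overlaps are pairwise: the all-$F$ configuration can only be infeasible if some pair of $F$-slots, say $F_1$ and $F_2$, conflicts; and then the selection $F_1,F_2,T_3$---which corresponds to a \emph{satisfied} clause---contains that same conflicting pair and is likewise infeasible. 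So the predicate ``at least one of three literals is true'' is not expressible as an independent-set condition on three two-slot elements, and the gadget as specified cannot exist. The standard repair, and what the paper does, is to add an auxiliary arrow: the elongated edge $e^*$ atop the middle leg has several candidate positions, each blocked by the $F$-position of a different leg, so that $e^*$ can be placed without conflict if and only if at least one leg carries ``true.''

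A secondary concern: essentially all of the content of this theorem is the geometric engineering you defer. One must exhibit concrete gadgets, with explicit dimensions in terms of the single global radius $r_E$, for which the two-position property and its propagation along chains provably hold; note in particular that you cannot ``scale each gadget small relative to the distances between gadgets,'' since every arrow circle has the same radius $r_E$ and the whole construction must live at scale $\Theta(r_E)$---which is why the paper uses long uniform chains of fixed-size blocks rather than shrunken gadgets separated by large gaps. Until such constructions are written down and verified (including that distinct gadgets do not interact unintentionally), the equivalence between satisfiability of $\phi$ and the existence of a placement with overlap number zero does not ``follow directly.''
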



The proof of Thm.~\ref{th:np-hardness} uses a reduction from \textsc{Planar 3-SAT}~\cite{DBLP:journals/siamcomp/Lichtenstein82}, and is similar to those used in the context of edge labeling~\cite{DBLP:journals/comgeo/KakoulisT01,Marks91thecomputational,DBLP:journals/ijcga/StrijkW01,Wolff00asimple}. It yields an instance of \ap where the search of a valid placement $P_\Gamma$ with $\ov{P_\Gamma}=0$ can be restricted to a finite number of valid positions for each arrow. Hence, \ap remains NP-hard even if we fix a finite set of positions for each arrow, and a valid placement with overlap number zero (if any) may only choose from these positions. As this variant of \ap, which we call \dap, clearly belongs to NP, it is NP-complete.

\section{Algorithms}\label{se:algo}

We describe algorithms for the optimization version of \dap. We assume that a set of valid positions for each arrow is given, based on $\{\Gamma, r_V, r_E\}$, and look for a valid placement $P_\Gamma$ that minimizes $\ov{P_\Gamma}$ over this set of positions. We give both an exact algorithm and two variants of a heuristic, which we experimentally compare in Section~\ref{se:experiments}. 
Given an edge $e \in E$, let $A_e$ denote the set of valid positions for the arrow of edge $e$, and let $A:=\bigcup_{e \in E} A_e$ be the set of all valid positions. Our algorithms are based on an \emph{arrow conflict graph} $C_A$, depending on $A$, $\Gamma$, and $r_E$. The positions $A$ form the node set of $C_A$. Two positions are \emph{conflicting}, and connected by an (undirected) edge in $C_A$, if they correspond to positions of different edges and the arrows would overlap when placed on these positions. Finding a valid placement $P_\Gamma$ with $\ov{P_\Gamma}=0$ means to select one element from each $A_e$ such that they form an independent set in $C_A$. More general, finding a valid placement $P_\Gamma$ with $\ov{P_\Gamma}=k$ ($k \geq 0$) means to select one element from each $A_e$ such that they induce a subgraph with $k$ edges in  $C_A$. Our exact algorithm minimizes $k$ using an ILP formulation, while our heuristic adopts a greedy strategy. Both techniques try to minimize the distance of each arrow from its target vertex as a secondary objective. 
However, our algorithms can be easily adapted to privilege other positions (e.g., close to the source vertices, in the middle of the edges, etc.), or to consider bidirected edges.

\paragraph{ILP formulation.} 
For each position $p_e\in A_e$ of an edge $e =(v,u)$, we have a binary variable $x_{p_e}$. We define a distance $d(p_e) \in \{1, \ldots, |A_e| \}$, from $p_e$ to $u$:  $d(p_e)=1$ ($d(p_e)=|A_e|$) means that $p_e$ is the position closest (farthest, respectively) to $u$. Let $E_A:=E(C_A)$ be the pairs of conflicting positions. For every $(p_e,p_g)\in E_A$, we define a binary variable $y_{p_e p_g}$. The total number of variables is $O(|A|^2)$, and we write:
\begin{align}
\min \sum_{(p_e,p_g)\in E_A} y_{p_e p_g} &+ \frac{1}{M} \cdot \sum_{e \in E}\sum_{p_e \in A_e} d(p_e) x_{p_e} \label{con:obj}\\
&\sum_{p_e \in A_e} x_{p_e} = 1 && \forall e \in E\label{con:oneLabelxEdge}\\
&x_{p_e} + x_{p_g} \leq y_{p_e p_g} + 1 && \forall (p_e,p_g) \in E_A\label{con:isConflict}
\end{align}

The objective function minimizes the overlap number and, secondly, the sum of the distances of the arrows from their target vertices. To do this, the second term is divided for a sufficiently large constant $M$. For example, one can set $M = |E|\max_{e \in E}\{|A_e|\}$.
Equations~\eqref{con:oneLabelxEdge} guarantee that exactly one valid position per edge is selected. Constraint~\eqref{con:isConflict} enforces $y_{p_e p_g}=1$ if both conflicting positions $x_{p_e}$ and $x_{p_g}$ are chosen.
In the following, the exact technique will be referred to as \opt.    
We remark that optimization problems and ILP formulations similar to above have been given in the context of edge and map labeling~\cite{DBLP:conf/isaac/GemsaNN13,DBLP:conf/wea/GemsaNR14,DBLP:journals/comgeo/KakoulisT01,DBLP:journals/comgeo/KreveldSW99,DBLP:journals/geoinformatica/StrijkK02,DBLP:journals/ijcga/StrijkW01}. 

\paragraph{Heuristics.} Our heuristics follow a greedy strategy, again based on $C_A$. Let $p_e\in A_e\subset V(C_A)$ as above. We initially assigns cost $c(p_e)$ to each position $p_e$, and then execute $|E|$ iterations. In each iteration, we select a position $p_e$ of minimum cost (over all $e\in E$) and place the arrow of the corresponding edge there; then, we remove all positions $A_e$ from $C_A$ (including $p_e$), and update the costs of the remaining positions.
We define $c(p_e) := \delta(p_e) + \frac{1}{M}  d(p_e) + T\sigma_{p_e}$, where: $\delta(p_e)$ is the degree of $p_e$ in $C_A$ (i.e., the number of positions conflicting with $p_e$); constant $M$ and ``distance'' $d(p_e)$ are defined as in the ILP; $\sigma_{p_e}$ is the number of already chosen positions conflicting with $p_e$ (0 in the first iteration); $T$ is 
equal to the maximum initial cost of a valid position.
This cost function guarantees that: $(i)$ positions conflicting with already selected positions are chosen only if necessary; $(ii)$ the algorithm prefers positions with the minimum number of conflicts with the remaining positions and, among them, those closer to the target vertex. 
Since constructing $C_A$ may be time-consuming in practice (we compare all pairs of valid positions), we also consider using only a subset of the edges of $C_A$; we may consider only those conflicts arising from positions of adjacent edges in the input graph. In the following, \heurglobal is the heuristic that considers full $C_A$, while \heurlocal is the variant based on this simplified version of~$C_A$.

\section{Experimental Analysis}\label{se:experiments}


\paragraph{Experimental Setting.} We use three different sets of graph: 
\planar are biconnected planar digraphs with edge density $1.5$--$2.5$, randomly generated with the OGDF~\cite{OGDF}.
\random are digraphs generated with uniform probability distribution with edge density $1.4$--$1.6$. Both sets contain $30$ instances each; $6$ graphs for each number of vertices $n \in \{100, 200, \dots, 500\}$. We did not generate denser graphs, as they give rise to cluttered drawings with few valid positions for the arrows---there, the arrow placement problem seems less relevant. Finally, \north is a popular set of $1,275$ real-world digraphs with $10$--$100$ vertices and average density $1.4$~\cite{north}. We draw each instance of the three sets with straight-line edges using OGDF's FM3 algorithm~\cite{DBLP:conf/gd/HachulJ04}. The layouts of the \planar may contain edge crossings, as they are generated by a force-directed approach. 

Value $r_E$ is chosen as the minimum of $(a)$ $40\%$ of the shortest edge length, $(b)$ $25\%$ of the average edge length, and $(c)$ 10 pixels, but enforced to be at least 3 pixels. We set $r_V:=r_E$. For each edge $e=(w,u)$ we compute positions $A_e$ as follows. The $i$-th position, $i\geq 1$, has its center at distance $r_V + i \cdot r_E$ from target~$u$. We generate positions as long as they have distance at least $r_V+r_E$ from source vertex $w$. We then remove positions that overlap with edges or vertices in $\Gamma$.  If no valid positions remain, we choose the one closest to $u$ as $e$'s unique arrow position. Thus, in the final placements there might be some conflicts between an arrow and a vertex or edge of the drawing. We call such conflicts \emph{crossings} and observe that a single invalid position may result in several crossings.

 We apply \opt, \heurglobal, and \heurlocal to each of the drawings. 
 The algorithms are implemented in C\# and run on an Intel Core i7-3630QM notebook with $8$ GB RAM under Windows $10$. For the ILP we use 
 CPLEX 12.6.1 with default settings. 
For each computation, we measure total running time, overlap number, and number of crossings (due to invalid positions, see above). From the qualitative point of view, we also compare the algorithmic results with a trivial placement, called \editor, which simply places each arrow close to its target vertex, similarly as most graph editors do. 
We also measure \emph{placement time}, i.e., the time spent by an algorithm to find a placement \emph{after} $C_A$ has been computed.  

\newcommand{\mypicsize}{0.42\columnwidth}
\begin{figure}[p]
  \centering
  	\subfigure[]{
    \includegraphics[width=\mypicsize] {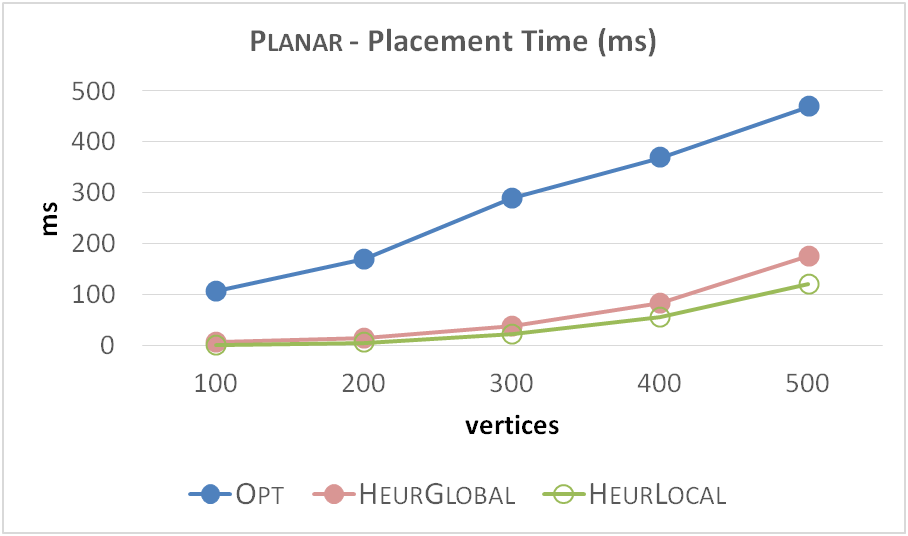}
    \label{fi:pl-solvingTime}
	} \hfill 
    \subfigure[]{
    \includegraphics[width=\mypicsize] {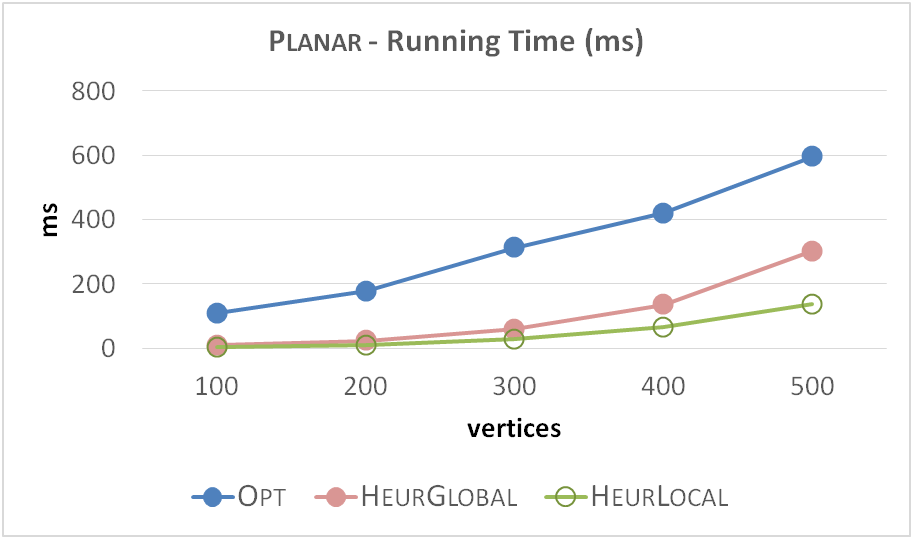}
    \label{fi:pl-solvingTime-ConflictGraph}
	} 
    \subfigure[]{
    \includegraphics[width=\mypicsize] {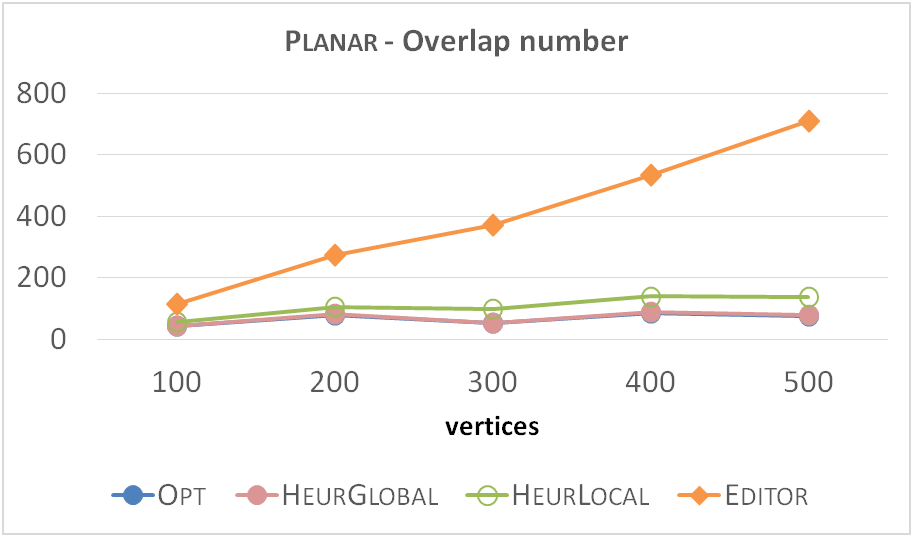}
    \label{fi:pl-conflicts}
	} \hfill 
   \subfigure[]{
    \includegraphics[width=\mypicsize] {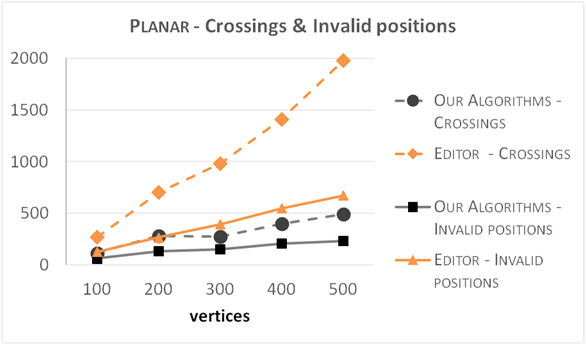}
    \label{fi:pl-crossings}
	} 
\caption{\planar: (a) placement time;
(b) total running time;
(c) number of overlaps;
(d) number of crossings (edge/vertex with arrow) and of invalid positions.}
\label{fi:planar}
\end{figure}
\begin{figure}[p]
  \centering
    \subfigure[]{
    \includegraphics[width=\mypicsize] {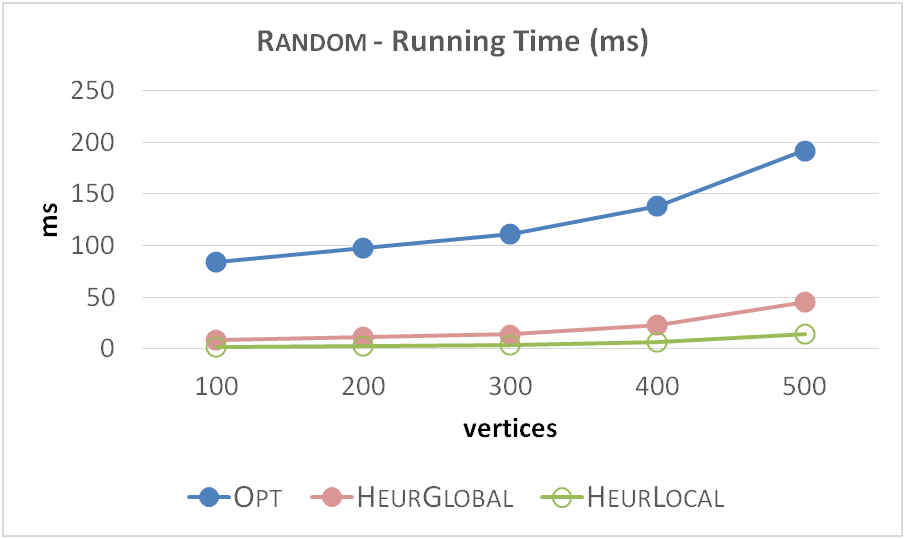}
    \label{fi:rand-solvingTime-ConflictGraph}
	} \hfill
    \subfigure[]{
    \includegraphics[width=\mypicsize] {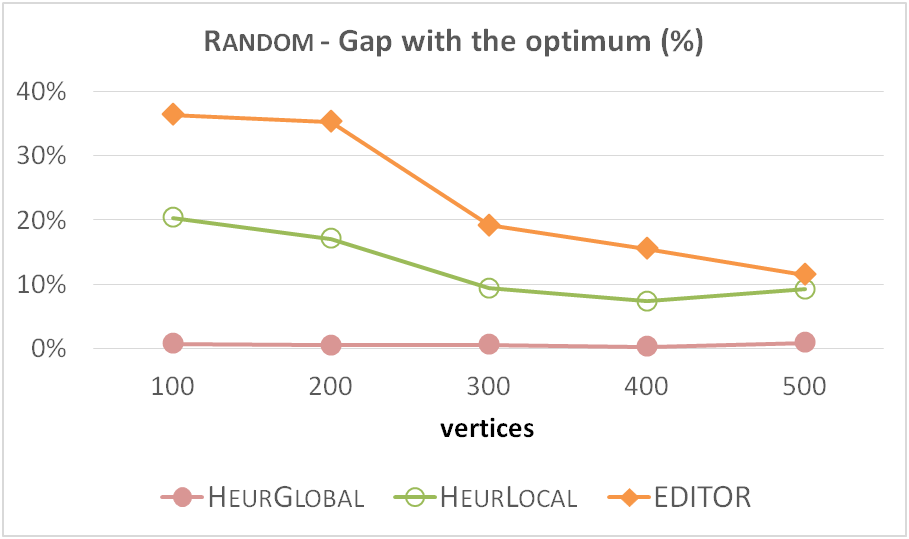}
    \label{fi:rand-conflicts-gap}
	} 
\caption{\random:
(a) total running time; 
(b) number of overlaps, relative to~\opt.}
\label{fi:random}
\end{figure}
\begin{figure}[p]
  \centering
    \subfigure[]{
    \includegraphics[width=\mypicsize] {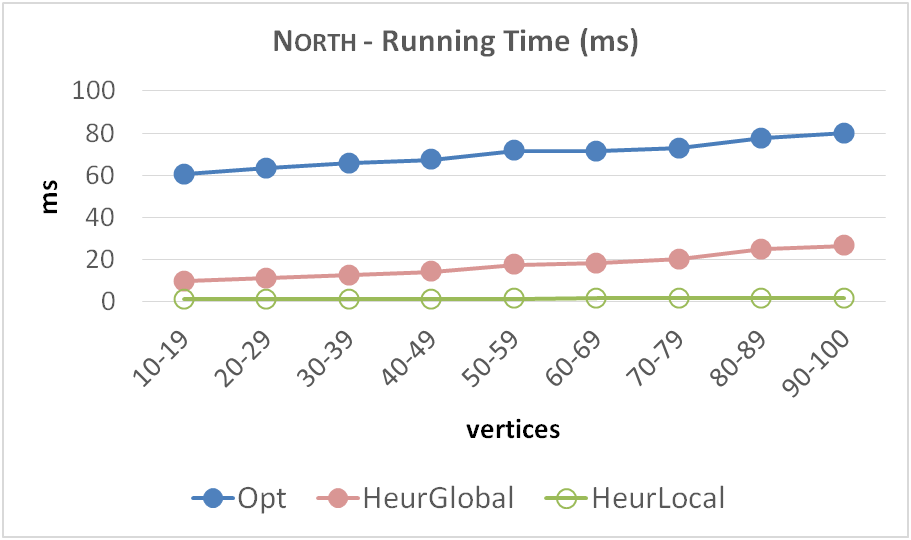}
    \label{fi:north-solvingTime-ConflictGraph}
	} \hfill
    \subfigure[]{
    \includegraphics[width=\mypicsize] {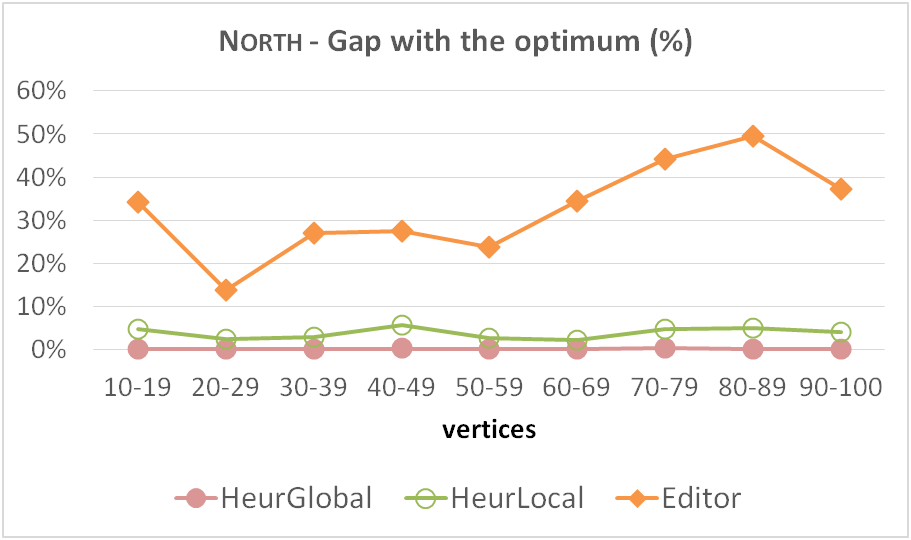}
    \label{fi:north-conflicts-gap}
	} 
\caption{\north:
(a) total running time; 
(b) number of overlaps, relative to to~\opt.}
\label{fi:north}
\end{figure}

\paragraph{Results.} 
For \planar, the average numbers of positions in $C_A$ range from $640$ to $7,150$. 
Figs.~\ref{fi:pl-solvingTime} and~\ref{fi:pl-solvingTime-ConflictGraph} show that for  \planar all the algorithms are very applicable, although \opt is of course significantly slower. While the pure placement time for \heurglobal is not much longer than that of \heurlocal, it suffers from the fact that generating the full $C_A$ constitutes roughly 1/3 of its overall runtime, whereas the generation time of the reduced conflict graph is rather neglectable.
On the other hand, Fig.~\ref{fi:pl-conflicts} shows that \heurglobal 
practically coincides with the optimum w.r.t.\ the number of overlaps (its average gap is below $3\%$; the worst gap is $6.76\%$). \heurlocal still gives very good solutions, with gaps about half that of \editor. 
Figure~\ref{fi:pl-crossings} shows that our algorithms reduce the number of invalid positions by $33$--$77\%$ compared to \editor.
The number of \emph{crossings} is the same for all our algorithms, as they occur when we cannot find any valid position for arrows during the generation procedure. Figure~\ref{fi:pl-crossings} shows that our algorithms cause significantly less crossings than \editor.


For \random, average numbers of positions in $C_A$ range from $640$ to $4,377$. 
The general behavior for \random is similar to that of \planar but the difference between the running time of \opt and the heuristics is slightly more pronounced (Fig.~\ref{fi:rand-solvingTime-ConflictGraph}). Again, constructing $C_A$ constitutes roughly 1/3 of \heurglobal's running time.
Still, the quality of \heurglobal's solutions again essentially coincide with \opt; the other heuristics are now closer than before, see Fig.~\ref{fi:rand-conflicts-gap}.

For \north, the average $|V(C_A)|$ range from $62$ to $311$. We observe the same patterns, see Figs.~\ref{fi:north}: \heurlocal requires nearly no time, while \heurglobal is very competitive at just above 20ms for the large graphs (a third of which is the construction of full $C_A$). Again, \opt always finds a solution very quickly, in fact within roughly 80ms. \heurglobal again gives essentially optimal solutions, while \heurlocal exhibits $5$--$10\%$ gaps. \editor requires $30$--$50\%$ more overlaps than \opt.


\section{Conclusions and Future Work}\label{se:conclusions}
We discussed optimizing arrow head placement in directed graph drawings, to improve readability. As mentioned, this is very related to studies in map and graph labeling, but its specifics seem to make a more focused study worthwhile.

Our techniques are of practical use, and could be sped-up by constructing $C_A$ using a sweepline or the labeling techniques in~\cite{DBLP:journals/algorithmica/WagnerWKS01}. It would be interesting to validate the effectiveness of our approach through a user study (e.g. for tasks that involve path recognition). Moreover, one may consider both placing labels and arrow heads.
Finally, the non-discretized problem variant, as well as the variants' respective (practical) benefits, should be investigated in more depth.

\paragraph{Acknowledgments.}
Research on this problem started at the Dagstuhl seminar 15052~\cite{brandes_et_al:DR:2015:5041}. We thank Michael Kaufmann and Dorothea Wagner for valuable discussions, and the anonymous referees for their comments and suggestions.

%
%
%
%
%
%

{\small \bibliography{arrows}}
\bibliographystyle{splncs03}

\newpage

\section*{Appendix A -- Additional examples}
Figure~\ref{fi:large-example-North} depicts some larger real-world layouts from our graph benchmark. 
Examples of layouts for larger artificial instances are shown in Fig.~\ref{fi:large-example-Planar-Random}. 
When the visual complexity of the graph layout increases, the arrow placement problem seems to become less relevant: indeed, cluttered drawings have few valid positions for the arrows and also the solutions of our algorithms may contain several crossings between arrows and other objects; see, e.g., Figs.~\ref{fi:large-Random-ED} and~\ref{fi:large-Random-EX}.

\begin{figure}[]
  \centering
  	\subfigure{
    \includegraphics[width=\mypicsize] {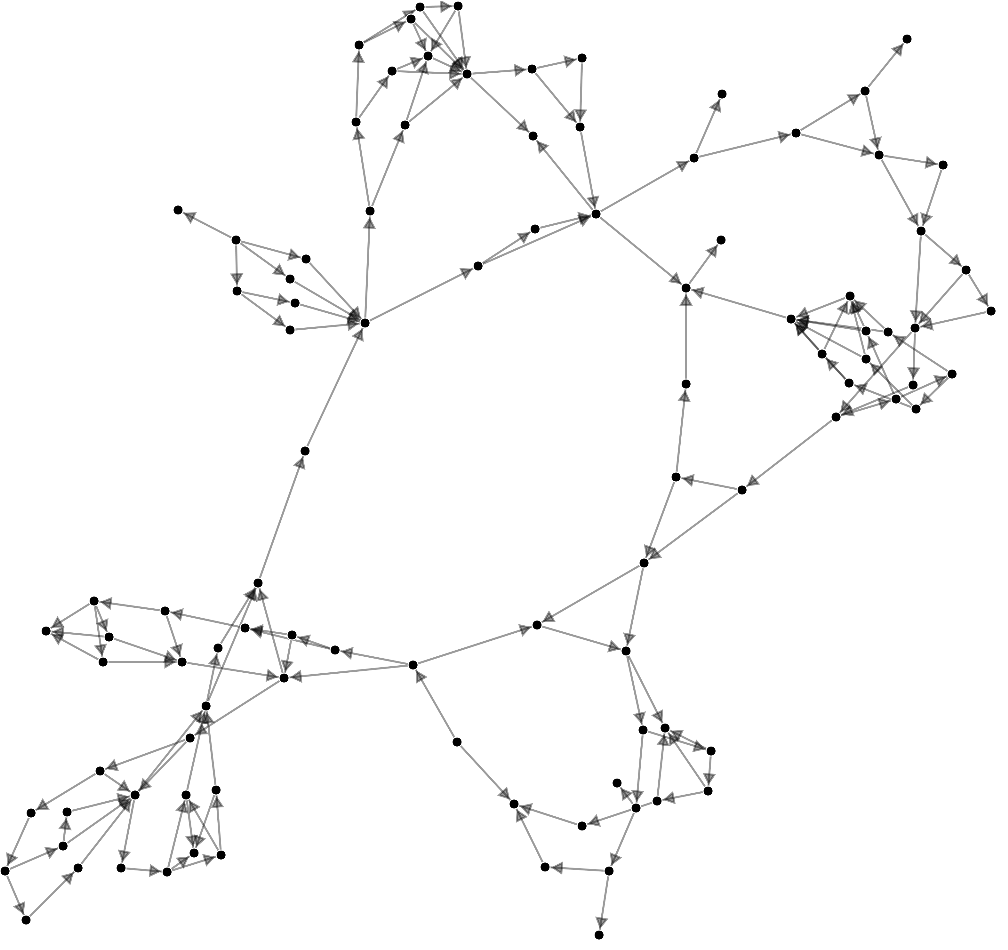}} \hfill
    \subfigure{
    \includegraphics[width=\mypicsize] {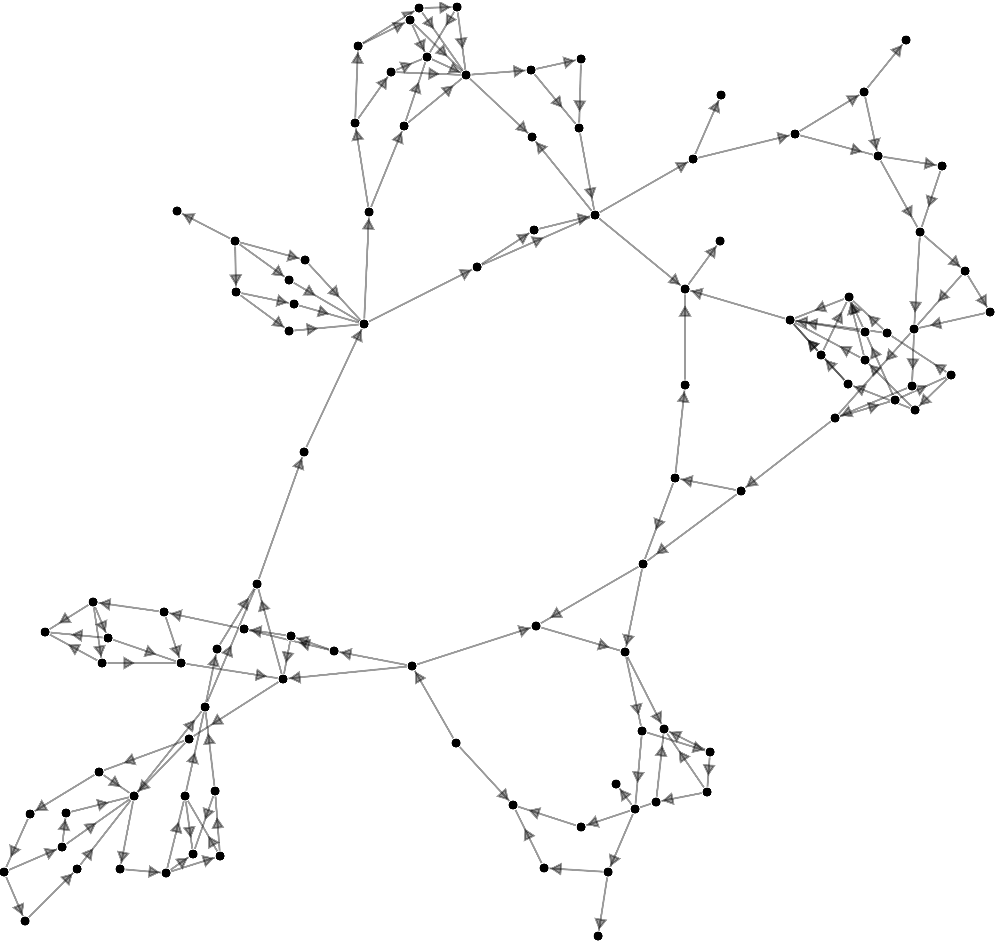}}
    \subfigure{
    \includegraphics[width=\mypicsize] {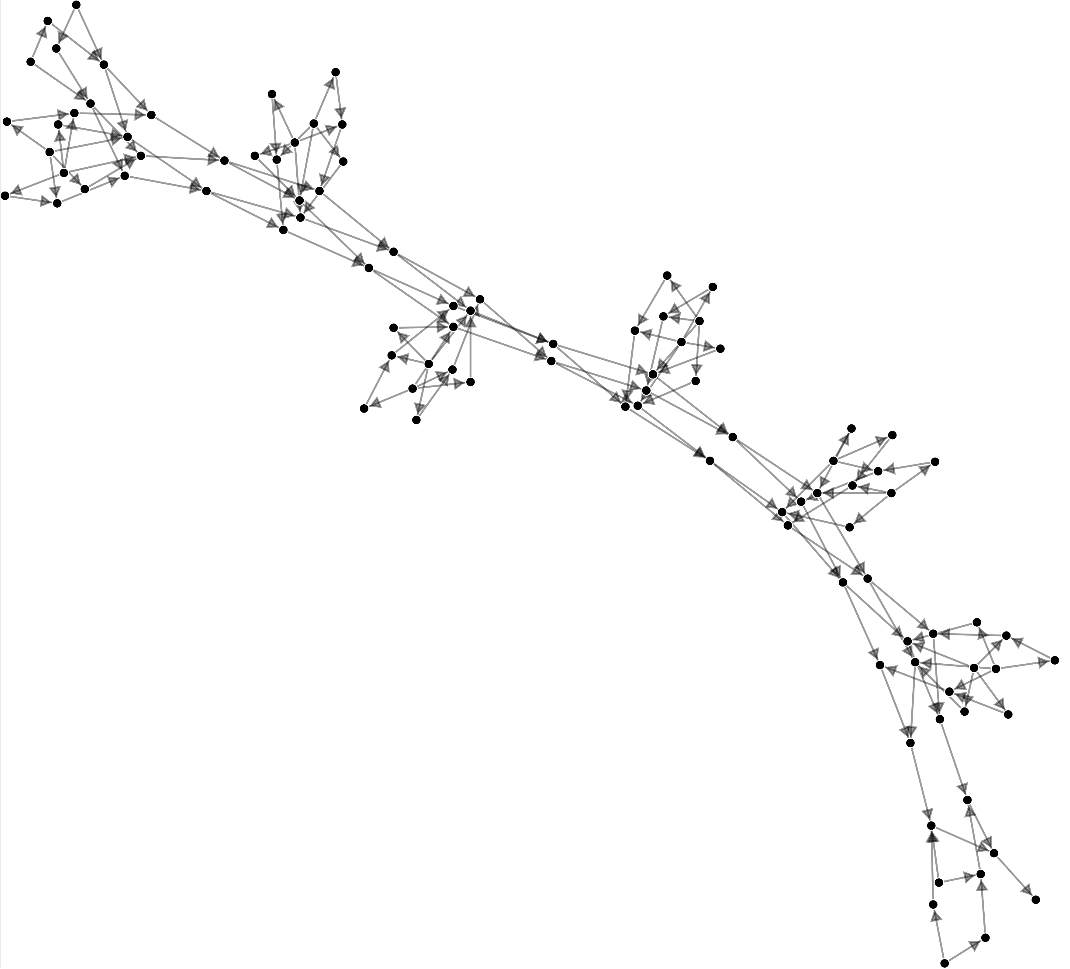}} \hfill
   \subfigure{
    \includegraphics[width=\mypicsize] {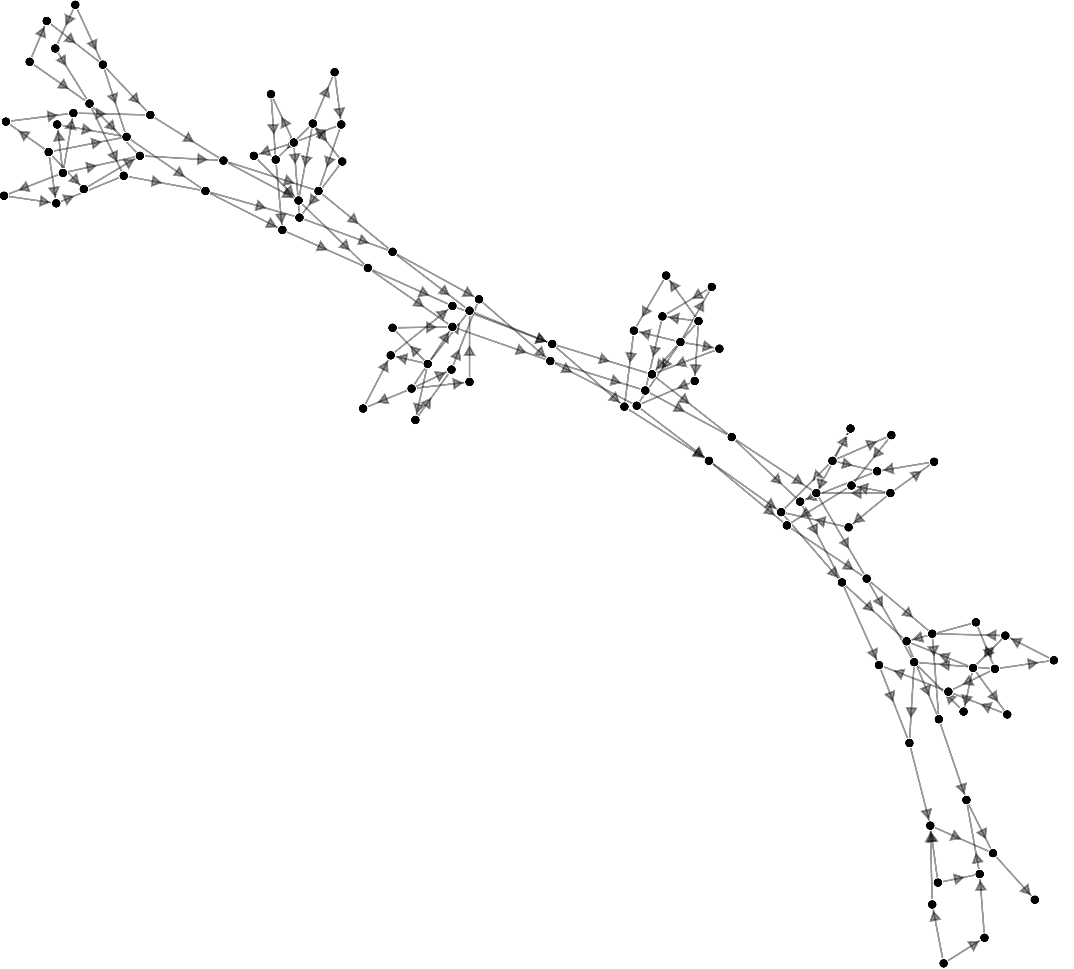}} 
\caption{\label{fi:large-example-North}Drawings of two \north digraphs with 99 vertices and 148 edges. On the left, arrows are placed by a common editor; several arrows overlap. On the right, arrows are placed by our exact technique. }
\end{figure}

\begin{figure}[!h]
  \centering
  	\subfigure[]{
    \includegraphics[width=\mypicsize] {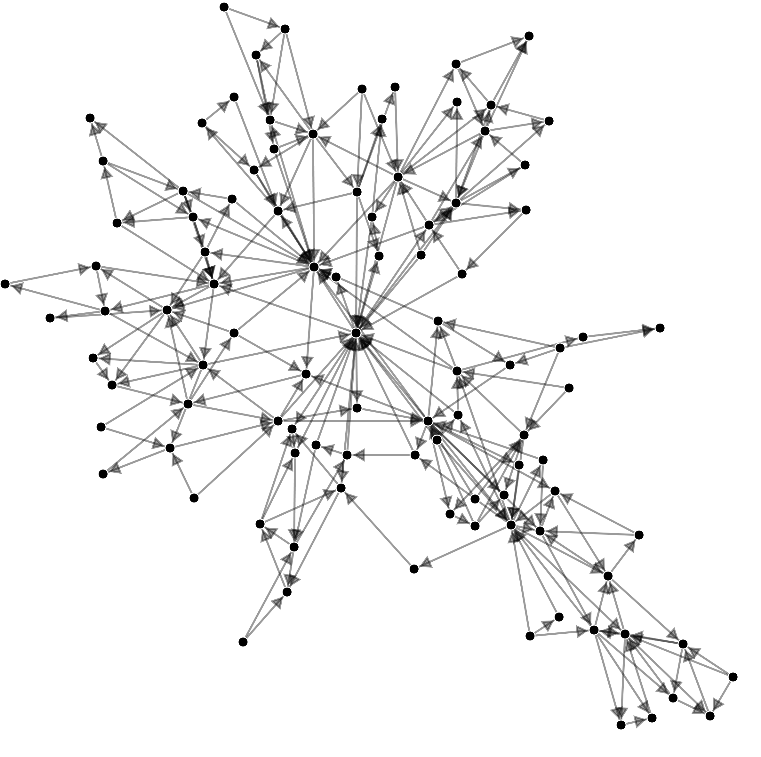}
    \label{fi:large-Planar-ED}
	} 
    \subfigure[]{
    \includegraphics[width=\mypicsize] {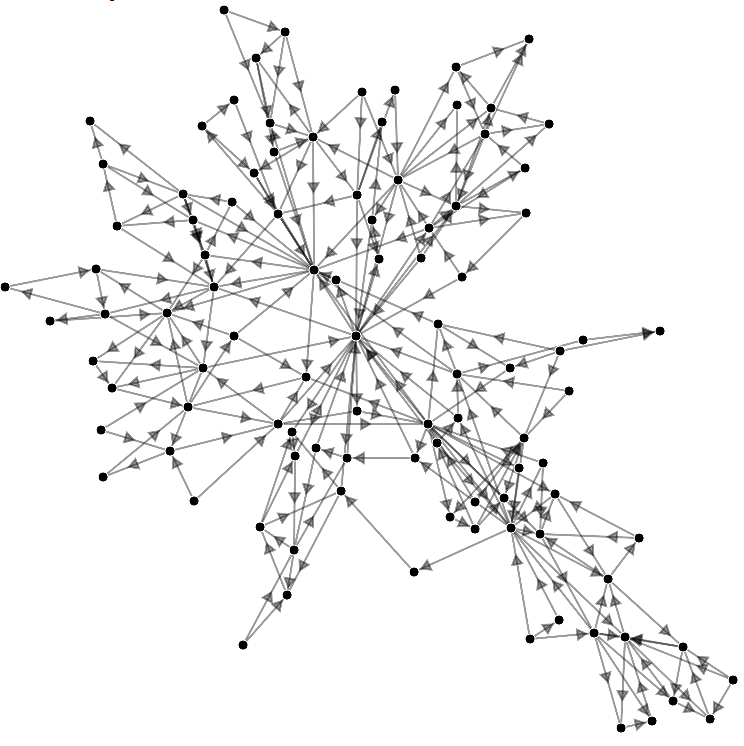}
    \label{fi:large-Planar-EX}
	} 
    \subfigure[]{
    \includegraphics[width=\mypicsize] {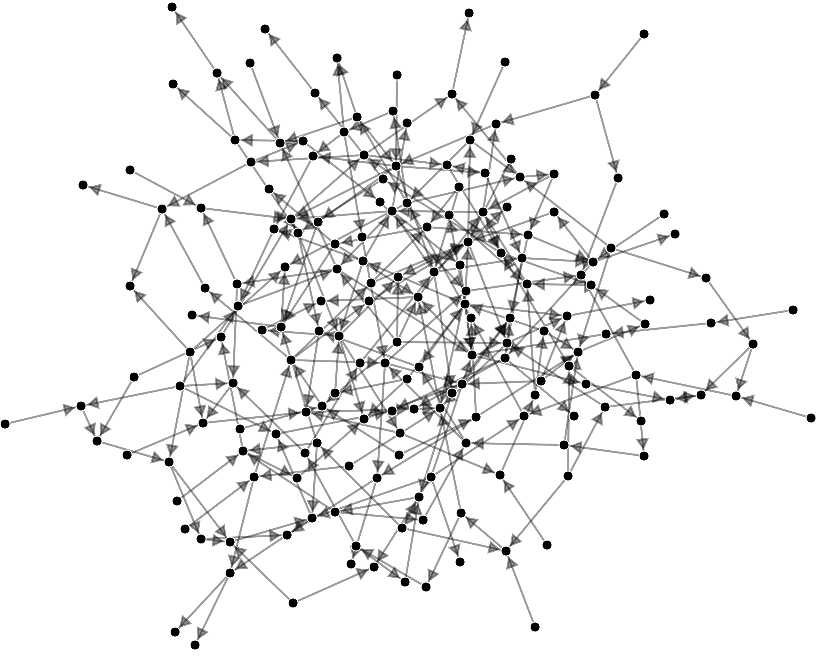}
    \label{fi:large-Random-ED}
	} 
   \subfigure[]{
    \includegraphics[width=\mypicsize] {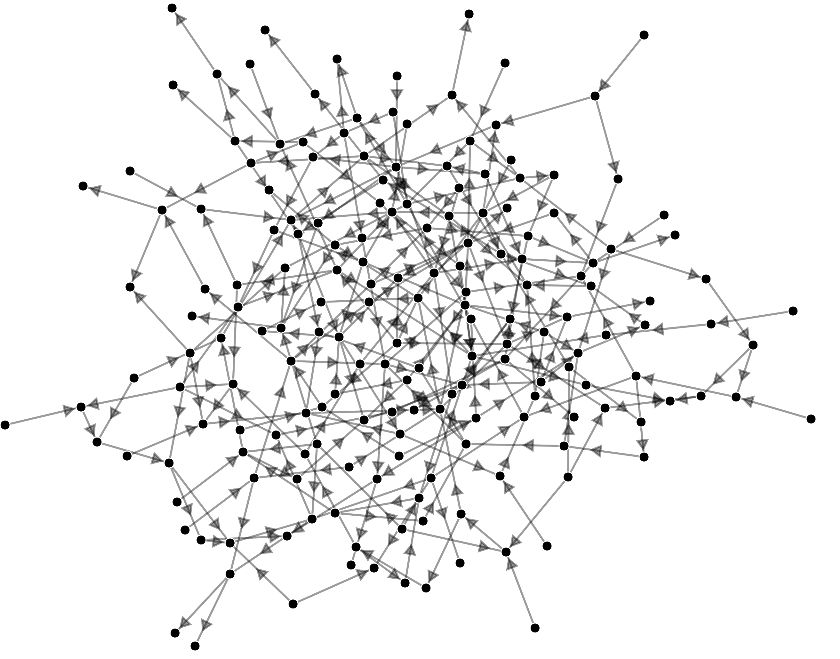}
    \label{fi:large-Random-EX}
	} 	
\caption{\label{fi:large-example-Planar-Random}(a)\&(b) Drawings of a \planar digraph with 100 vertices and 250 edges; (c)\&(d) Drawings of a \random digraph with 200 vertices and 280 edges. In the drawings on the left side the arrows are placed by a common editor; in the drawings on the right side the arrows are placed by our exact technique.}
\end{figure}


\section*{Appendix B -- Additional Charts from the Experiments}

We report some additional charts from the experiments of Section~\ref{se:experiments}; see Figs.~\ref{fi:supporting-charts-1} and~\ref{fi:supporting-charts-2}. Moreover, in order to evaluate the scalability of our approach, we extended both \planar and \random sets with $30$ larger instances each ($6$ graphs for each number of vertices $n \in \{600, 700, \dots, 1000\}$) and ran our algorithms on them. 
The behavior of the algorithms is similar to that reported for smaller instances: the quality of \heurglobal's solutions almost coincides with the quality of \opt's solutions; see Figs.~\ref{fi:large-pl-conflicts-A},~\ref{fi:large-rand-conflicts-A}. Constructing $C_A$ remains the most expensive procedure, especially for the larger instances in the \planar set; see Figs.~\ref{fi:large-pl-solvingTime-A},~\ref{fi:large-pl-solvingTime-ConflictGraph-A}. Finally, our algorithms still generate significatively less crossings than \editor; see Figs.~\ref{fi:large-pl-crossings-A},~\ref{fi:large-rand-crossings-A}. Since \editor can use positions that are not valid for our algorithms, its performance in terms of number of overlaps tends to compare with the performances of our heuristics on some of the most cluttered layouts (see, e.g., the largest instances of \random). 

\begin{figure}[!h]
  \centering
  \subfigure[]{
    \includegraphics[width=0.46\textwidth] {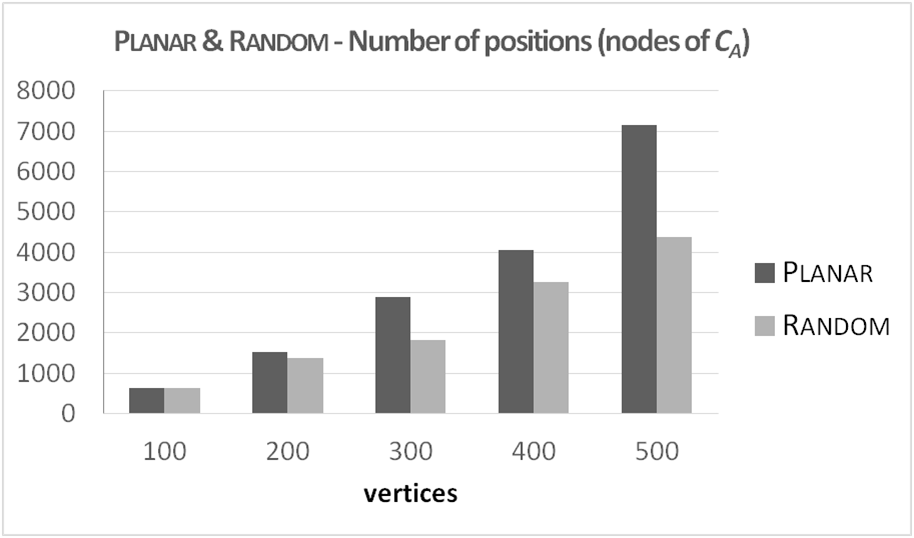}
    \label{fi:planar-random-positions}
	} 
    \subfigure[]{
    \includegraphics[width=0.46\textwidth] {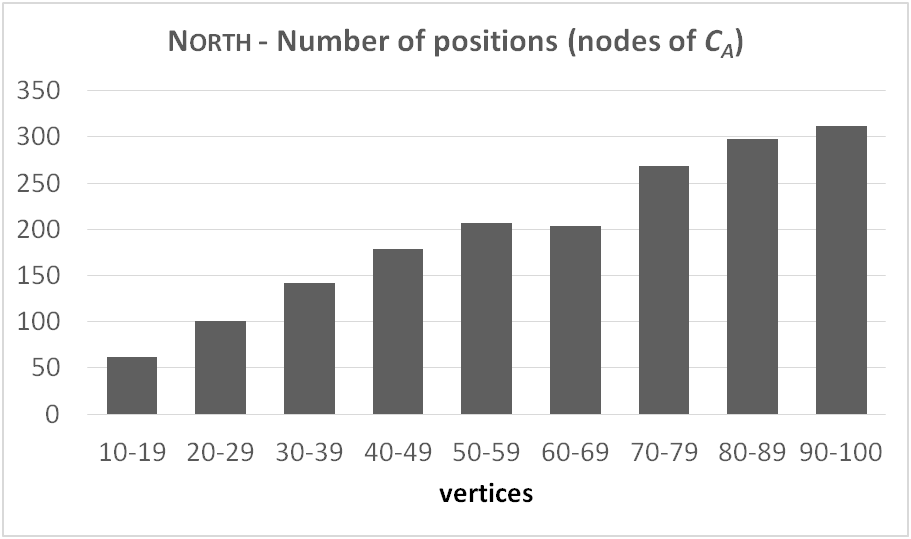}
    \label{fi:north-positions}
	} 
    \subfigure[]{
    \includegraphics[width=0.46\textwidth] {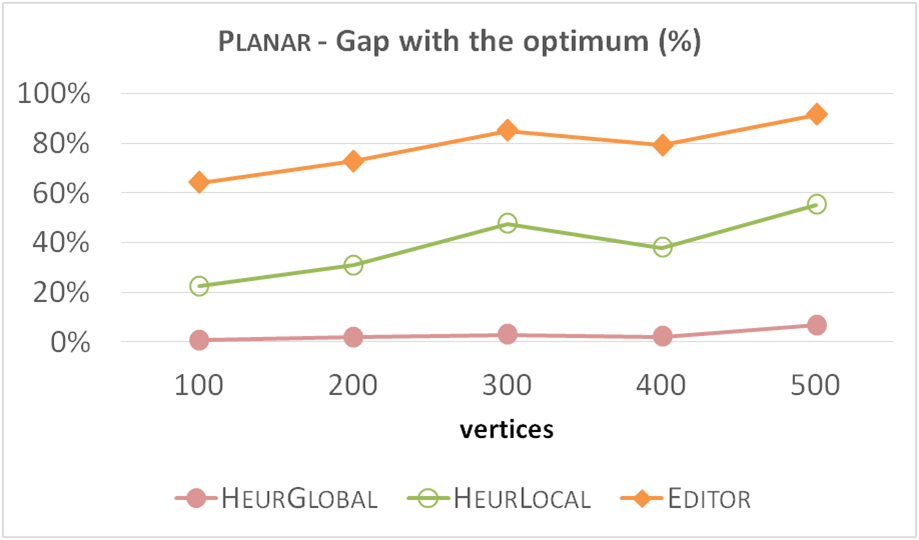}
    \label{fi:pl-conflicts-gap-A}
    }
  	\subfigure[]{
    \includegraphics[width=0.46\textwidth] {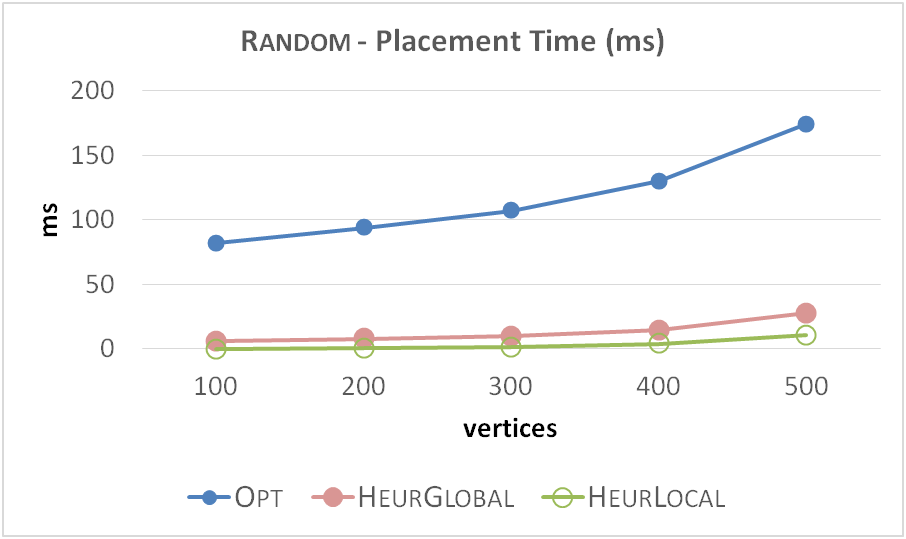}
    \label{fi:rand-solvingTime-A}
	} 
    \subfigure[]{
    \includegraphics[width=0.46\textwidth] {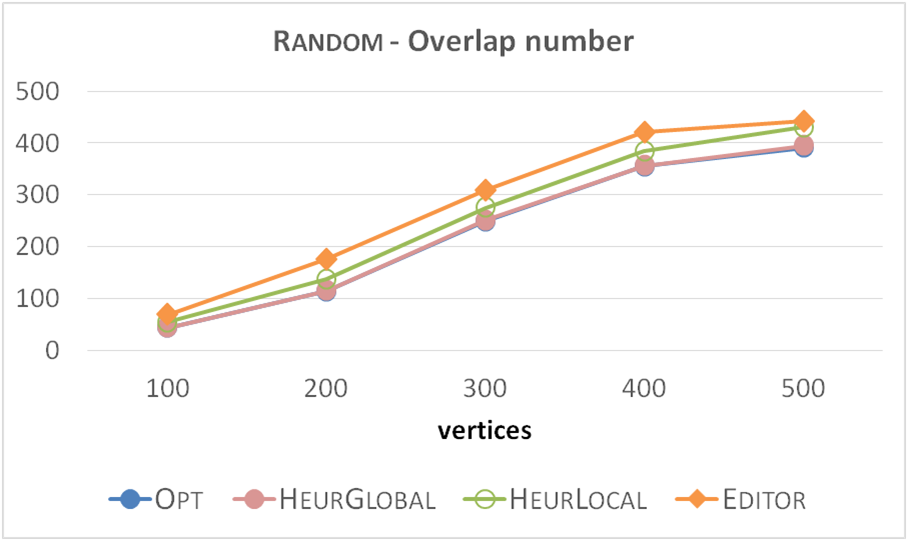}
    \label{fi:rand-conflicts-A}
	} 
   \subfigure[]{
    \includegraphics[width=0.46\textwidth] {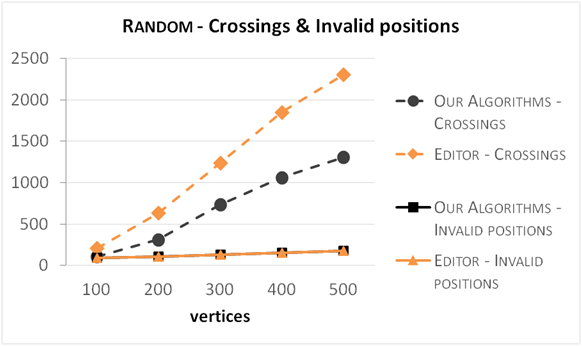}
    \label{fi:rand-crossings-A}
	} 
\caption{Number of positions (nodes in $C_A$) for: 
\planar and \random (a), and \north (b). --- 
\planar: (c) number of overlaps, relative to \opt. --- 
\random: (d) placement time; 
(e) number of overlaps; 
(f) number of crossings (between an edge/vertex and an arrow) and of invalid positions.}
\label{fi:supporting-charts-1}
\end{figure}

\begin{figure}[]
  \centering
	\subfigure[]{
    \includegraphics[width=0.46\textwidth] {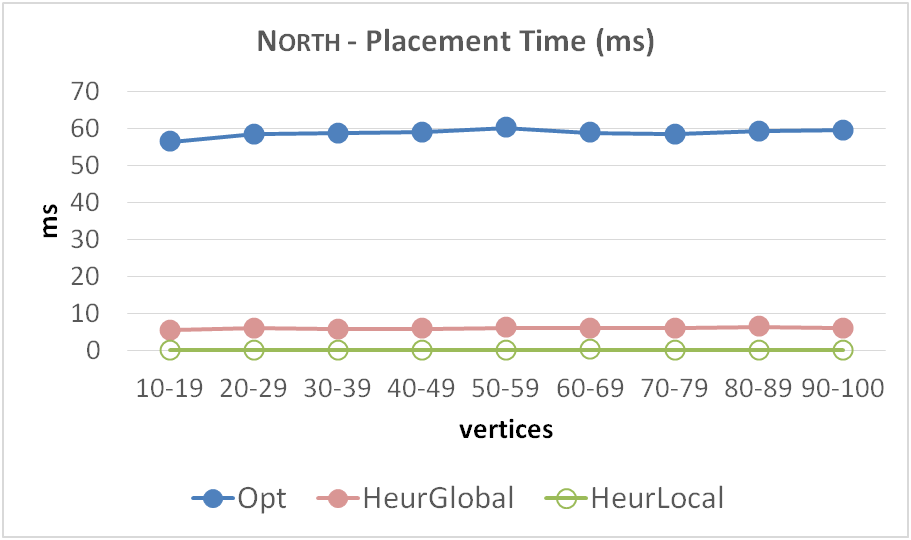}
    \label{fi:north-solvingTime-A}
	} 
    \subfigure[]{
    \includegraphics[width=0.46\textwidth] {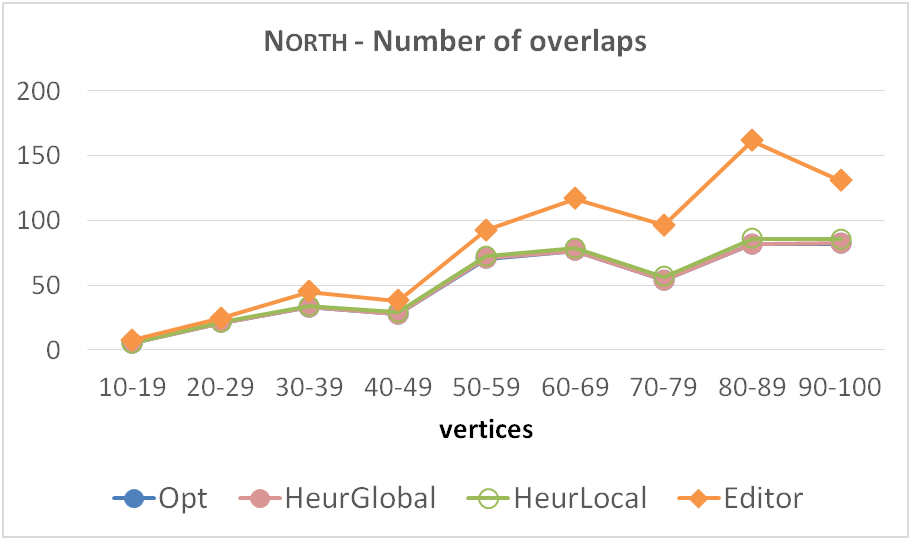}
    \label{fi:north-conflicts-A}
	} 
   \subfigure[]{
    \includegraphics[width=0.46\textwidth] {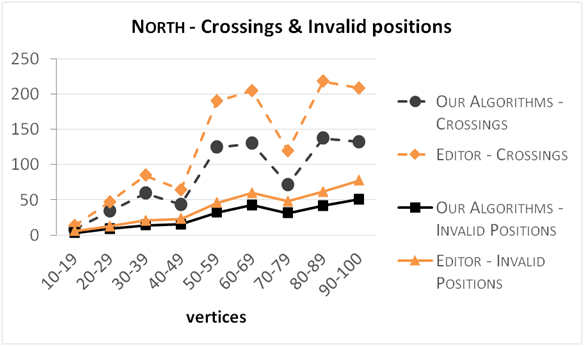}
    \label{fi:north-crossings-A}
	} 
\caption{\north: (a) placement time; 
(b) number of overlaps; 
(c) number of crossings (between an edge/vertex and an arrow) and of invalid positions.}
\label{fi:supporting-charts-2}
\end{figure}

\begin{figure}[]
  \centering
  	\subfigure[]{
    \includegraphics[width=\mypicsize] {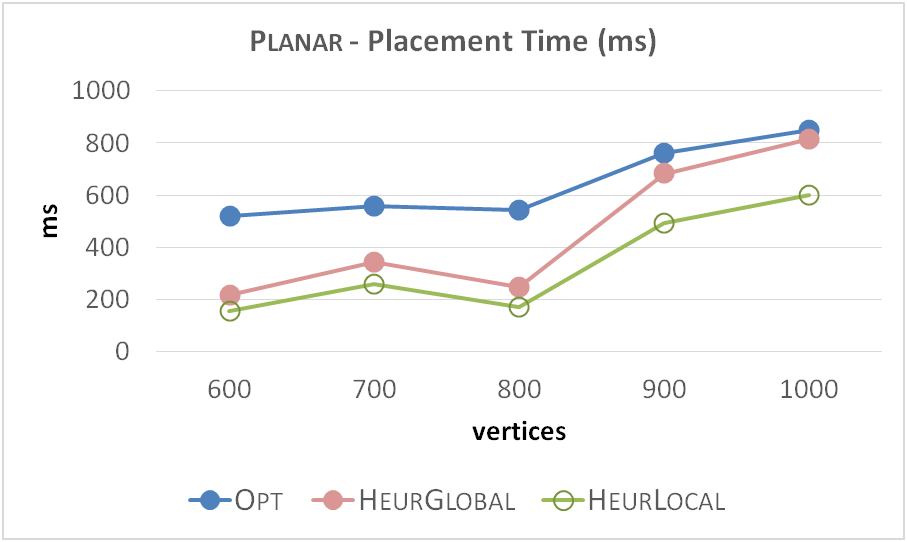}
    \label{fi:large-pl-solvingTime-A}
	} 
    \subfigure[]{
    \includegraphics[width=\mypicsize] {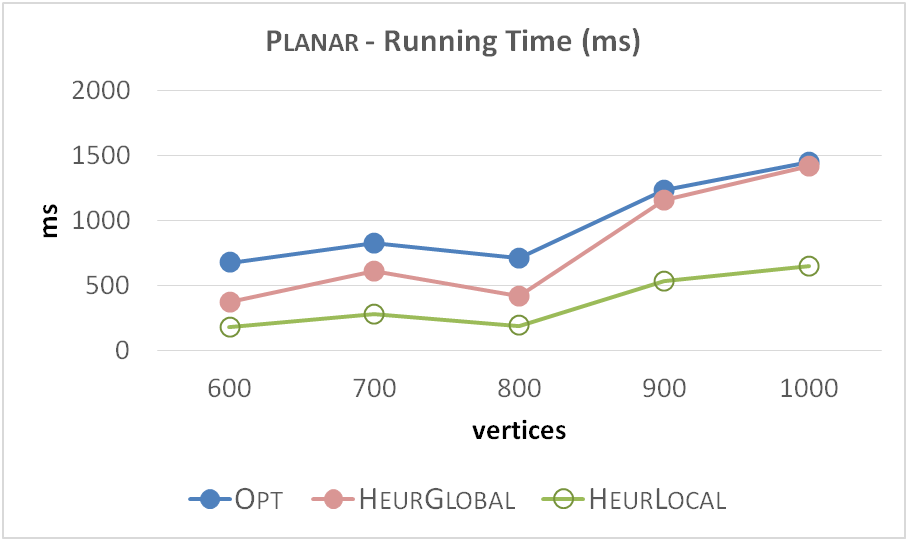}
    \label{fi:large-pl-solvingTime-ConflictGraph-A}
	} 
    \subfigure[]{
    \includegraphics[width=\mypicsize] {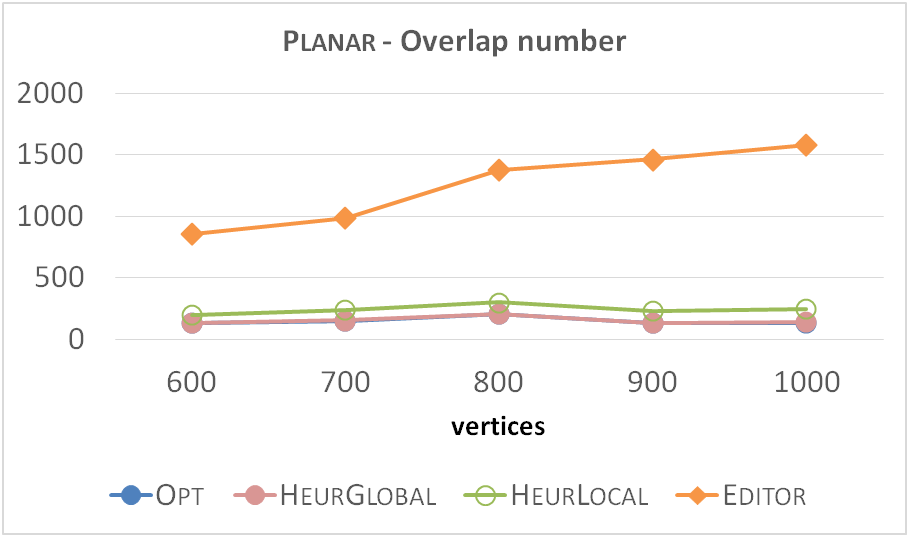}
    \label{fi:large-pl-conflicts-A}
	} 
    \subfigure[]{
    \includegraphics[width=\mypicsize] {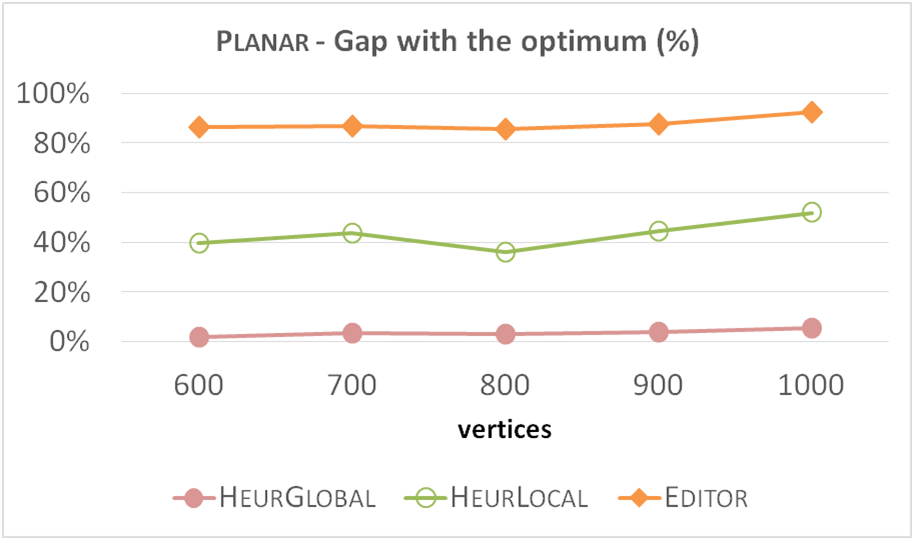}
    \label{fi:large-pl-conflicts-gap-A}
	} 
   \subfigure[]{
    \includegraphics[width=\mypicsize] {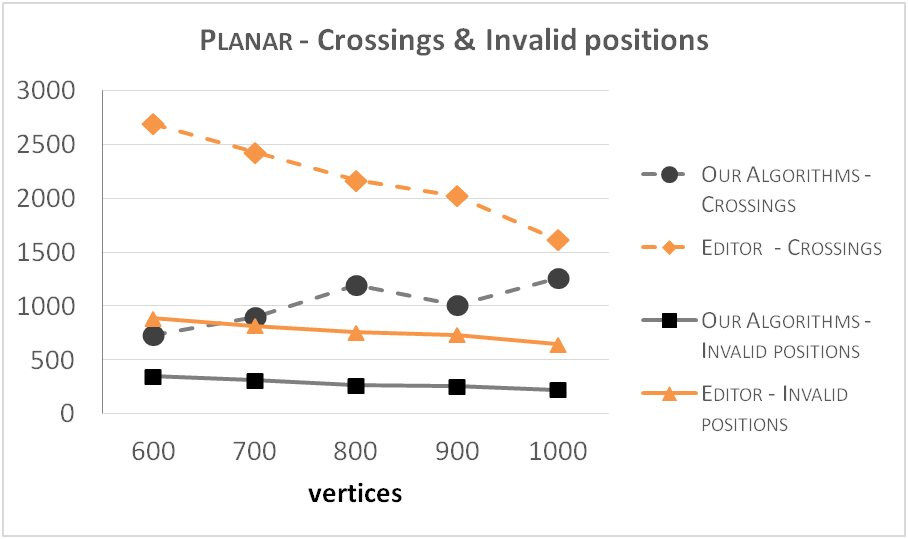}
    \label{fi:large-pl-crossings-A}
	} 
\caption{Larger graphs of \planar: (a) placement time;
(b) total running time;
(c) number of overlaps;
(d) number of overlaps, relative to \opt;
(e) number of crossings (between an edge/vertex and an arrow) and of invalid positions.}
\label{fi:large-planar}
\end{figure}

\begin{figure}[]
  \centering
  	\subfigure[]{
    \includegraphics[width=\mypicsize] {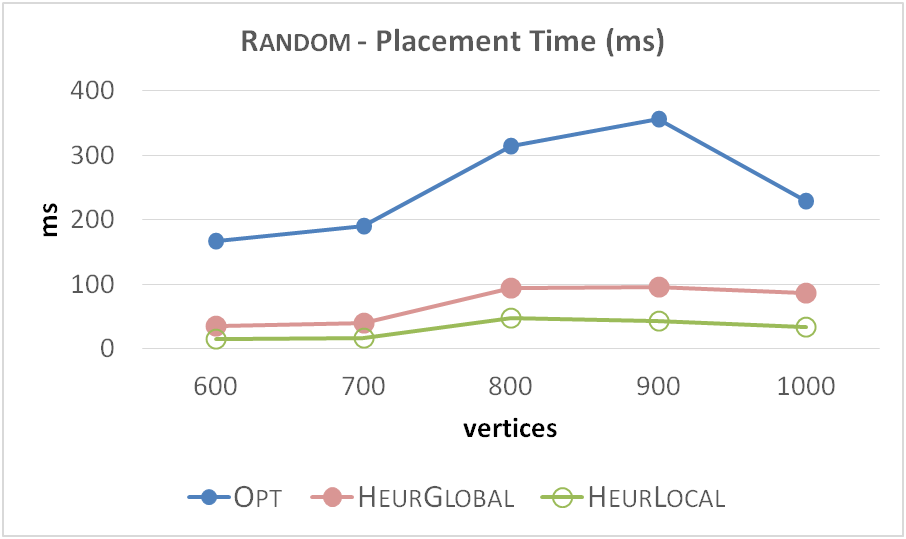}
    \label{fi:large-rand-solvingTime-A}
	} 
    \subfigure[]{
    \includegraphics[width=\mypicsize] {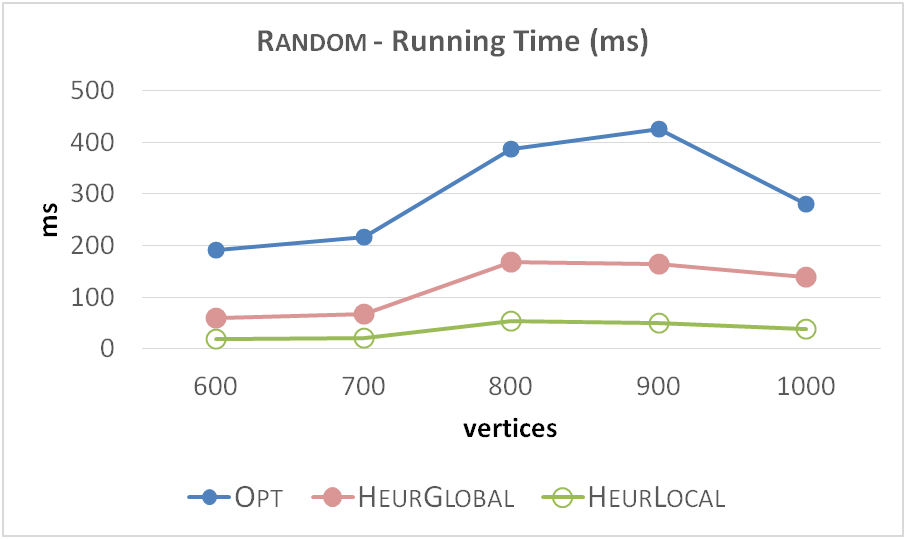}
    \label{fi:large-rand-solvingTime-ConflictGraph-A}
	} 
    \subfigure[]{
    \includegraphics[width=\mypicsize] {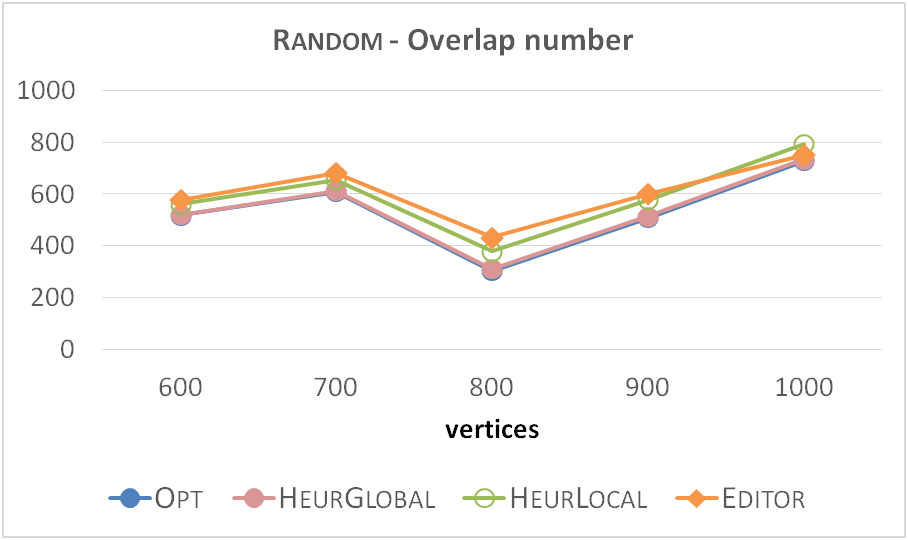}
    \label{fi:large-rand-conflicts-A}
	} 
    \subfigure[]{
    \includegraphics[width=\mypicsize] {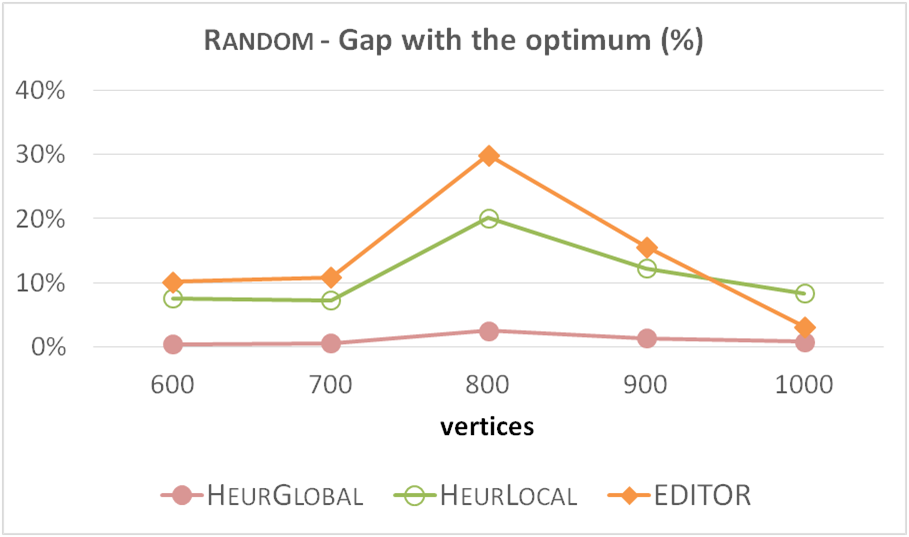}
    \label{fi:large-rand-conflicts-gap-A}
	} 
   \subfigure[]{
    \includegraphics[width=\mypicsize] {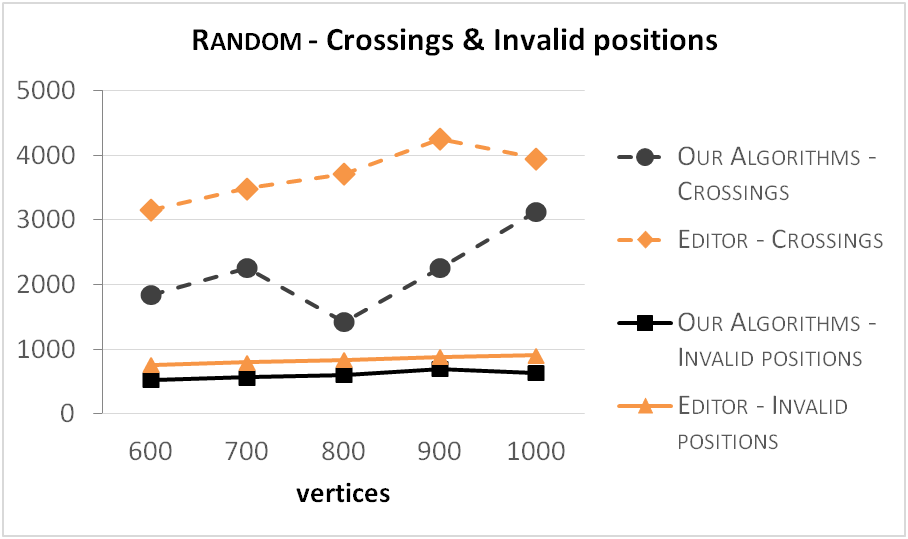}
    \label{fi:large-rand-crossings-A}
	} 
\caption{Larger graphs of \random : (a) placement time;
(b) total running time;
(c) number of overlaps;
(d) number of overlaps, relative to \opt;
(e) number of crossings (between an edge/vertex and an arrow) and of invalid positions.}
\label{fi:large-random}
\end{figure}

\clearpage

\section*{Appendix C -- NP-hardness Proof}

\setcounter{theorem}{0}
\begin{theorem}
The \ap problem is NP-hard.
\end{theorem}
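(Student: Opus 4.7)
The plan is to reduce from \textsc{Planar 3-SAT}, following the standard gadget-based template used for edge/label placement hardness proofs. Given a planar instance $\phi$ with variables $x_1,\dots,x_n$ and clauses $c_1,\dots,c_m$, I would start from a rectilinear planar embedding of its variable--clause incidence graph (variables on a horizontal spine, clauses above and below), and then replace each feature of that embedding by a geometric gadget built out of vertices and straight-line directed edges, so that the whole construction becomes an instance $\langle G,\Gamma,r_V,r_E\rangle$ of \ap of size polynomial in $|\phi|$. All gadgets will be calibrated to a fixed common pair $r_V,r_E$ (say $r_V=r_E=1$, with edges of carefully chosen lengths), and the drawing $\Gamma$ will be produced explicitly so that no auxiliary flexibility is introduced.

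The core device is a \emph{wire} gadget: a chain of short collinear edges along which the arrow of each edge has only two valid positions (near its source or near its target), because any other position would overlap a neighbouring vertex or a neighbouring edge placed just beside the wire. Two consecutive arrows in the wire cannot both sit in the ``inner'' slot without overlapping, so once one arrow is fixed the rest of the chain is forced to alternate, giving exactly two global states, which I interpret as \textbf{true} and \textbf{false}. The \emph{variable} gadget is a closed loop of such wires whose parity around the loop forces a single global state, so only the two consistent assignments are valid. The \emph{negation} gadget is simply a wire with an extra half-step so that the two stable states get swapped. Finally, the \emph{clause} gadget is a small arrangement where three wires meet: I place three candidate arrow slots of the incoming wires so close that at most two can be chosen in their \textbf{false} position without creating an arrow--arrow overlap, i.e.\ a valid overlap-zero placement exists exactly when at least one incoming literal arrives in the \textbf{true} state. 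A bend/turn gadget, essentially a short wire that changes direction at a vertex of large enough degree to absorb the geometric twist, lets me route the wires along the planar embedding.

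The two directions of the equivalence are then straightforward: from a satisfying assignment of $\phi$ I read off a valid placement with $\ov{P_\Gamma}=0$ by setting every wire to the state of its literal and locally solving each clause gadget using a satisfied incoming wire; conversely, from such a placement I read off an assignment by looking at the stable state of each variable-loop, and the clause gadgets certify that every clause has at least one true literal. Since the construction has polynomial size and the reduction can clearly be carried out in polynomial time, \ap is NP-hard. Membership of the discretized variant \dap in NP (hence NP-completeness) follows because, as the paper already argued, the construction can be set up so that for every arrow only a finite, polynomial-size set of positions needs to be considered when searching for a zero-overlap placement.

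The main obstacle is not the logical skeleton, which is standard for this family of labeling problems, but the geometric bookkeeping: one must verify that every forbidden arrow position is ruled out by an actual circle--circle or circle--segment overlap in $\Gamma$, and that no two gadgets accidentally interfere across the drawing. I would handle this by choosing a global scale so that distinct gadgets are separated by much more than $2r_E$, and by proving a single ``locality lemma'' stating that, for the chosen $r_V,r_E$ and inter-gadget spacing, the only conflicts between arrows are those explicitly designed inside a single gadget; all the gadget-level case analysis then reduces to checking a handful of pairwise circle intersections, which is routine.
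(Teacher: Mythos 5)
Your overall strategy---a reduction from \textsc{Planar 3-SAT} with two-state wires propagating truth values along a planar layout, negation realized by a half-step offset, and a clause gadget where three wires meet---is exactly the template the paper follows (it builds its wires and legs out of chains of triangle- and trapezoid-shaped blocks, but that difference is cosmetic). The one step that, as written, cannot work is your clause gadget. You propose to place the three \emph{false} slots of the incoming wires ``so close that at most two can be chosen in their false position without creating an arrow--arrow overlap.'' But an arrow--arrow overlap is a pairwise relation between two circles: if every pair among the three false slots is non-overlapping, then all three can be selected simultaneously; and if some pair does overlap, that pair alone is already forbidden, which would wrongly exclude satisfying assignments in which exactly those two literals are false. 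The constraint ``not all three simultaneously'' is genuinely ternary and cannot be expressed as a conjunction of pairwise circle-disjointness conditions on those three slots alone.

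The fix---and this is precisely what the paper does---is to add an auxiliary edge $e^*$ inside the clause gadget whose arrow has a small set of candidate positions, arranged so that each leg in its false state blocks some of these candidates and the three false states together block all of them; then $e^*$'s arrow can be placed without overlap if and only if at least one literal arrives true. With that correction (and the geometric bookkeeping you already anticipate, including checking that the side-obstacle edges you use to prune wire positions have placeable arrows themselves, which the paper avoids by making its triangle blocks mutually self-constraining), your argument goes through along essentially the same lines as the paper's.
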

\begin{proof}
We show that \ap is NP-hard by a reduction from {\sc Planar 3-SAT} ({\sc P3SAT})~\cite{DBLP:journals/siamcomp/Lichtenstein82}, the special case of {\sc 3-SAT} in which the bipartite graph of variables and clauses is planar. Our proof follows the proof of Knuth and Raghunathan~\footnote{Knuth, D.E., Raghunathan, A.: The problem of compatible representatives. SIAM J. Discrete Math. $5(3)$, $422$-$427$ ($1992$)} for the Metafont labeling problem in which the variables are placed onto a horizontal line, while the three-legged clauses are placed above or below them. The clauses are properly nested so that none of the legs between clauses and variables cross each other. The connections can either be positive or negative. We remark that our reduction technique is similar to those used in the context of edge and point labeling: labeling points with circular labels~\cite{DBLP:journals/ijcga/StrijkW01} and with axis-parallel rectangular labels in different models~\cite{Marks91thecomputational},~\cite{DBLP:journals/comgeo/KreveldSW99}; labeling edges with axis-parallel rectangular labels~\cite{Wolff00asimple}. To reduce {\sc P3SAT} to \ap, we describe how to construct from any planar $3$-CNF formula a (possibly) connected drawing $\Gamma$ such that there exists a valid placement $P_\Gamma$ of the arrows with $\ov{P_\Gamma}=0$ if and only if the formula is satisfiable.
Let $\phi$ be an instance of the {\sc P3SAT} problem consisting of the set of clauses $c_1,\dots,c_m$, each having three literals from the Boolean variables $x_1,\dots,x_n$. We now explain how to construct a variable gadget for each variable $x_i$ and a clause gadget for each clause $c_i$ of $\phi$. We use two building blocks. 

The first building block, which we call \emph{triangle-block}, is shown in Fig.~\ref{fi:triangle-block}: it is composed of an equilateral triangle $T$ having side length equal to $2r_E\sqrt{3}$. By basic geometry, it follows that along each edge $e$ of $T$ a circle $C_e$ (modeling arrow $a_e$) with its center on $e$ and radius $r_E$ admits only two alternative valid positions represented in the figure by circles with solid and dashed boundaries, respectively. In the following, we simply call a valid position represented by a circle with solid (dashed) boundary a \emph{solid (dashed) position}. 
It can be observed that there exist only two valid placements for $T$, one called \emph{solid placement} that uses only solid positions and one called \emph{dashed placement} that uses only dashed positions. We call \emph{base} of $T$, the horizontal edge of $T$, while we call the other two edges of $T$ the \emph{left} and the \emph{right} edges of $T$, respectively. 

The second building block which we call \emph{trapezoid-block} is a right trapezoid $Q$ in which the vertical side is missing. We call the two horizontal edges of $Q$ the \emph{major base} and the \emph{minor base} of $Q$ and the third edge of $Q$ the \emph{diagonal side} of $Q$. As shown in Fig.~\ref{fi:trapezoid-block}, we have two variants of $Q$ (called \emph{$Q$-left} and \emph{$Q$-right}) that are symmetric. The lengths of the three edges of $Q$ are chosen as follows. Denote by $q_1$ and $q_2$ ($q_3$ and $q_5$) the two end-points of the minor (major) base of $Q$ and denote by $q_4$ the projection of $q_1$ along the major base. Note that $\overline{q_1q_4}$ is the \emph{height} of $Q$. We have the following relations: $\overline{q_3q_4}=\overline{q_1q_4}$, $\overline{q_1q_2}=\overline{q_4q_5}$. The minor base of $Q$ has length $2r_E\sqrt{3}$, the height of $Q$ has length $r_E (1+ \frac{\sqrt{2}}{2})$, then the major base of $Q$ has length $r_E (1+ \frac{\sqrt{2}}{2}) + 2r_E\sqrt{3}$. Observe that the height of $Q$ is less than $2r_E$ and that the diagonal side of $Q$ is the diagonal of a square with side $\overline{q_3q_4}$.
By basic geometry it follows that: (i) the diagonal side of $Q$ admits a unique valid position that is tangent to segment $\overline{q_3q_4}$; (ii) the minor and major bases of $Q$ admit only two alternative valid positions. Also in this case it can be observed that beside the unique valid position of the diagonal side, the selected valid positions of the minor and major bases of $Q$ must be either both solid or both dashed. A valid placement of $Q$ using the unique valid position of the diagonal side and the two solid (dashed) positions of the minor and major bases is called a \emph{solid (dashed) placement}.

\smallskip

\begin{figure}[tb]
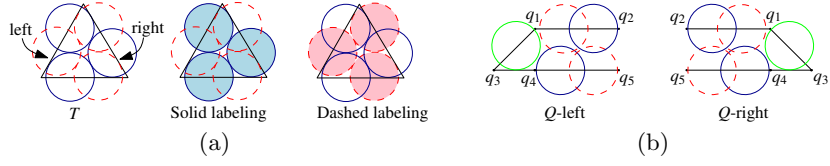

\centering
\subfigure[]{\label{fi:triangle-block}\includegraphics[scale=0.65, page=1]{gadget}}
\hfil
\subfigure[]{\label{fi:trapezoid-block}\includegraphics[scale=0.65, page=3]{gadget}}
 \caption{\small{(a) A triangle-block $T$ and the two possible valid placements of $T$: the solid placement and the dashed placement. (b) The two variants $Q$-left and $Q$-right of a trapezoid-block $Q$. The diagonal side admits a unique valid position, that is represented by a green circle.}}\label{fi:building-blocks}
\end{figure}

\noindent{\emph{Variable gadget.}} The variable gadget is composed by a horizontal chain $\mathcal{T}=\lbrace T_1,\dots,T_k \rbrace$ of triangle-blocks, for a given odd $k \geq 5$. Each triangle block $T_i$ with $i$ even is an upside-down triangle that shares its left edge with the right edge of the triangle $T_{i-1}$ and its right edge with the left edge of the triangle $T_{i+1}$ as shown in Fig.~\ref{fi:variable-gadget}. This implies that the choice of one between the solid placement or the dashed placement for the leftmost triangle $T_1$ enforces the same kind of placement for all other triangle-blocks of the chain. Hence, we call a \emph{solid (dashed) placement} of $\mathcal T$, a placement of $\mathcal T$ composed of only solid (dashed) positions. In our construction, the truth (false) value of a variable $x_i$ is encoded by a solid (dashed) placement of the corresponding variable gadget $\mathcal T^i$.

\begin{figure}[tb]
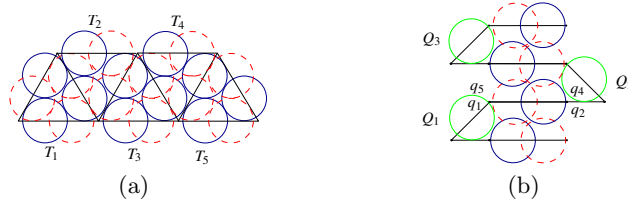

\centering
\subfigure[]{\label{fi:variable-gadget}\includegraphics[scale=0.6, page=2]{gadget}}
\hfil
\subfigure[]{\label{fi:leg-gadget}\includegraphics[scale=0.6, page=4]{gadget}}
 \caption{\small{(a) A variable gadget $\mathcal{T}=\lbrace T_1,\dots,T_5 \rbrace$. (b) A chain $\mathcal{Q}=\{ Q_1,Q_2,Q_3 \}$ of trapezoid-blocks, which represents the leg of a clause gadget. As example, the two end-points $q_1$ and $q_2$ of the minor base of $Q_1$ coincide with the points $q_5$ and $q_4$ of the major base of $Q_2$.}}\label{fi:gadgets}
\end{figure}

\smallskip

\noindent{\emph{Clause gadget.}} The clause gadget is composed of three \emph{vertical legs} and one \emph{horizontal part} (refer to Fig.~\ref{fi:clause-gadget}). The horizontal part is composed of two horizontal chains $\mathcal{T}_{L}= \lbrace T_{1},\dots,T_{l} \rbrace$ and $\mathcal{T}_{R}= \lbrace T_{1},\dots,T_{r} \rbrace$ of triangle-blocks (see Fig.~\ref{fi:clause-gadget}). We have the following properties: (i) The solid and dashed positions along the edges of the triangle-blocks of $\mathcal{T}_{R}$ are inverted with respect to those of the triangle-blocks of $\mathcal{T}_{L}$; (ii) For every odd $i$, the base of $T_{i} \in \mathcal{T}_{L}$ and of $T_{i} \in \mathcal{T}_{R}$ lie on the same horizontal line $l_1$; (iii) The solid position of the right edge of $T_{l} \in \mathcal{T}_{L}$ is tangent to the solid position of the left edge of $T_{r} \in \mathcal{T}_{R}$. Denote by $p$ this tangent point.

\begin{figure}[tb]
\begin{center}
\includegraphics[scale=0.6, page=5]{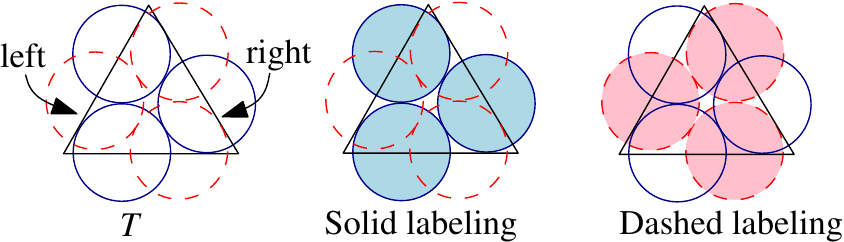}
\end{center}
\caption{Illustration of a clause gadget. The clause of a boolean formula $\phi$ modeled by this example is $c_1=(\overline{x_1}\vee x_2 \vee x_3$). Note that if $x_1$ is true and $x_2$ and $x_3$ are false, then edge $e^*$ does not admit a valid position.}
\label{fi:clause-gadget} 
\end{figure} 

We now describe the three legs, called \emph{left leg}, \emph{middle leg}, and \emph{right leg}. 
Each leg is a vertical chain $\mathcal{Q}=\{Q_1,\dots,Q_k\}$ (with $k$ odd) of trapezoid-blocks composed in the following way (see Fig.~\ref{fi:leg-gadget}): each $Q_i$ with $i$ even is a $Q$-right trapezoid-block that shares its major base with the minor base of $Q_{i-1}$ (that is a $Q$-left trapezoid). We denote by $\mathcal{Q}_{L}$, $\mathcal{Q}_{M}$ and $\mathcal{Q}_{R}$ the vertical chains of the left, the middle and the right legs, respectively. We denote by the \emph{top base (bottom base)} of each leg the minor base (major base) of the topmost (bottom most) trapezoid-block $Q_k$ ($Q_1$) of each leg. We have the following properties: (i) $\mathcal{Q}_{L}$, $\mathcal{Q}_{M}$, $\mathcal{Q}_{R}$ have the same number $k$ of trapezoid-blocks, and therefore, their top bases (bottom bases) lie on the same horizontal line $l_2$ ($l_3$); (ii) The top base of $\mathcal{Q}_{M}$ is longer than those of $\mathcal{Q}_{L}$ and $\mathcal{Q}_{R}$ so that this edge (denoted by $e*$) is the only one that admits more than two valid positions. We set the length of $e*$ equal to $r_E(4\sqrt{3}-2)$. 

We place the legs below the horizontal part in the following way: we set the Euclidean distance $d$ between $l_1$ and $l_2$ equal to $r_E \left( 1+ \frac{\sqrt{2}}{2} \right)$, that is $d < 2r_E$ and for the left leg, we vertically align the middle point of the top base of $\mathcal{Q}_{L}$ with the middle point of the base of $T_1 \in \mathcal{T}_{L}$; for the middle leg we vertically align the middle point of the top base of $\mathcal{Q}_{M}$ with point $p$; for the right leg we vertically align the middle point of the top base of $\mathcal{Q}_{R}$ with the left end-point of the base of $T_{r} \in \mathcal{T}_{R}$.

The above geometric relations guarantee the following properties:
\textbf{(a)} The choice of one between the solid position and the dashed position for the bottom-base of $\mathcal{Q}_{L}$ ($\mathcal{Q}_{R}$) enforces the same kind of positions for all the other trapezoid-blocks of the leg and for all the triangle-blocks of $\mathcal{T}_{L}$ ($\mathcal{T}_{R}$). The same holds for the trapezoid-blocks of $\mathcal{Q}_{M}$. That is, the truth or the false value is transmitted along each leg and along the horizontal part of the clause.
\textbf{(b)} If each leg transmits the false value (i.e. the enforced positions are dashed), then the length of edge $e*$ guarantees that there does not exist a valid position for $e*$. Otherwise, if at least one of the three legs transmits the truth value (i.e. at least one leg uses solid positions), then there exists a valid position for edge $e*$.

In Fig.~\ref{fi:clause-gadget} we also show how each clause leg is connected to a variable gadget. We align the bases of the upside-down triangles of each variable gadget $\mathcal T^i$ (corresponding to a variable $x_i$) along a horizontal line $l_4$ and, as in the case of $l_1$ and $l_2$, we set the Euclidean distance between $l_3$ and $l_4$ equal to $d=r_E \left( 1+ \frac{\sqrt{2}}{2} \right)$. Depending on the literal we attach the clause leg to the corresponding variable gadget in one of two possible places. Denote by $q$ the middle point between the centers of the solid and dashed positions of the bottom base of a leg. If the literal is negated we vertically align $q$ with the middle point of the base of a triangle-block $T_i$ with $i$ even (i.e. an upside-down triangle) of the variable gadget $\mathcal T^i$. If the literal is not negated, we vertically align $q$ with the end-point shared by the bases of the two triangle blocks $T_i$ and $T_{i+2}$ with $i$ even. 

\begin{figure}[bt]
\begin{center}
\includegraphics[scale=0.5, page=6]{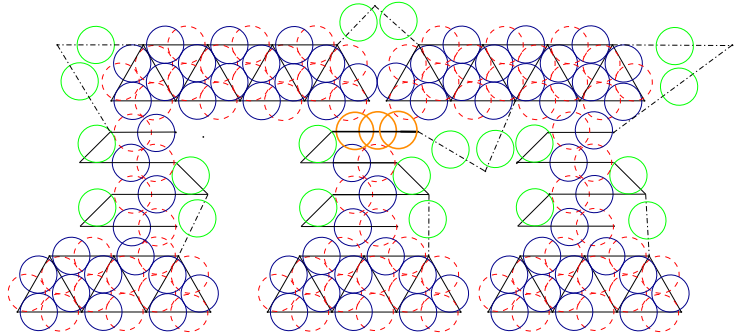}
\end{center}
\caption{The clause of Fig.~\ref{fi:clause-gadget} where some other edges have been added to make $\Gamma$ connected. The added edges are represented by dashed dotted edges and admit always a valid position.}
\label{fi:clause-connected} 
\end{figure} 

If all literals of a clause are false, then the edge $e*$ does not admit a valid position. Indeed all the three legs of the clause transmits the false value (i.e. the selected positions of all the three legs are dashed). Otherwise if at least one literal is true, then $e*$ and all the other edges admit a valid position. Indeed in this case, the clause leg attached to the true literal transmits the truth value (i.e. the selected positions for it are solid). Therefore, given any  planar $3$-CNF formula $\phi$ there exists a valid placement $P_\Gamma$ of the arrows such that $\ov{P_\Gamma}=0$ if and only if $\phi$ is satisfiable. 

Moreover the construction guarantees that the variables are placed onto a horizontal line and they can be extended to reach all the clauses. The same holds for the three legs and the two horizontal parts of the clauses.

Observe that the drawing $\Gamma$ described so far is not connected. In order to make $\Gamma$ connected, we can add a suitable number of extra edges, represented in Fig.~\ref{fi:clause-connected} as dashed-dotted segments, that always admit a valid position.\qed
\end{proof}


\end{document}